\tikzset{elliptic state/.style={draw,ellipse}}
\newtheorem{assumption}[theorem]{Assumption}
\newcommand{\s}{\mathbf{s}}
\newcommand{\code}[1]{{\texttt {#1}}}
\newcommand{\ol}[1]{\overline{#1}}
\newcommand{\p}{\phi}
\newcommand{\R}{\mathbb{R}}
\newcommand{\E}{\mathbb{E}}
\newcommand{\A}{\mathcal{A}}
\newcommand{\D}{\mathcal{D}}
\newcommand{\B}{\mathcal{B}}
\newcommand{\G}{\mathcal{G}}
\newcommand{\M}{\mathcal{M}}
\newcommand{\Rc}{\mathcal{R}}
\renewcommand{\S}{\mathcal{S}}
\renewcommand{\P}{\mathcal{P}}
\renewcommand{\L}{\mathcal{L}}
\newcommand{\Zc}{\mathcal{Z}}
\newcommand{\traj}{\mathcal{Z}}
\newcommand{\bestr}{\operatorname{br}}
\newcommand{\true}{\code{true}}
\newcommand{\false}{\code{false}}
\newcommand{\choice}[2]{#1 \; \code{or} \; #2}
\newcommand{\always}[1]{~ \code{ensuring} \; #1}
\newcommand{\eventually}[1]{\code{achieve} \; #1}
\newcommand{\semantics}[1]{{\llbracket #1 \rrbracket}}
\newcommand{\safe}{\text{safe}}
\newcommand{\prioritizedpolicies}{\mathsf{PrioritizedPolicies}}
\newcommand{\progress}{\mathsf{progress}}
\newcommand{\AbstractGraph}{\mathsf{AbstractGraph}}
\newcommand{\estimatecost}{\mathsf{EstimatePolicyWelfare}}
\newcommand{\learnpolicy}{\mathsf{LearnEdgePolicy}}
\newcommand{\MakeProduct}{\mathsf{ProductAbstractGraph}}
\newcommand{\outgoing}{\mathsf{outgoing}}
\newcommand{\pathcostmaxheap}{\mathsf{pathPolicyWelfareMaxHeap}}
\newcommand{\pop}{\mathsf{pop}}
\newcommand{\reachdistribution}{\mathsf{ReachDistribution}}
\newcommand{\StateDistribution
}{\mathsf{AverageDistribution}}
\newcommand{\TopoSort}{\mathsf{TopologicalSort}}
\newcommand{\TSortList}{\mathsf{topoSortedVertexStack}}
\newcommand{\first}{\mathsf{First}}
\newcommand{\reward}{\tilde{R}}
\newcommand{\spectrl}{\textsc{Spectrl}\xspace}
\newcommand{\nvi}{\textsc{nvi}\xspace}
\newcommand{\maqrm}{\textsc{maqrm}\xspace}
\newcommand{\nash}{\textsc{HighNashSearch}\xspace}
\begin{document}

\title{Specification-Guided Learning of Nash Equilibria with High Social Welfare\thanks{This is the extended version of a paper with the same title that appeared at CAV 2022.}
}

%
%
\author{Kishor Jothimurugan, Suguman Bansal, Osbert Bastani, Rajeev Alur}
%
%
\institute{University of Pennsylvania}

\maketitle         

\SetWatermarkAngle{0}
\SetWatermarkText{\raisebox{22.5cm}{%
  \hspace{14.62977cm}%
  \includegraphics{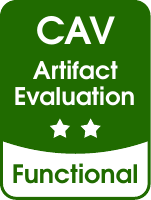}%
}}

\begin{abstract}
Reinforcement learning has been shown to be an effective strategy for automatically training policies for challenging control problems.
Focusing on non-cooperative multi-agent systems, we propose a novel reinforcement learning framework for training joint policies that form a Nash equilibrium.
In our approach, rather than providing low-level reward functions, the user provides high-level specifications that encode the objective of each agent.
Then, guided by the structure of the specifications, our algorithm searches over policies to identify one that provably forms an $\epsilon$-Nash equilibrium (with high probability).
Importantly, it prioritizes policies in a way that maximizes social welfare across all agents.
Our empirical evaluation demonstrates that our algorithm computes equilibrium policies with high social welfare, whereas state-of-the-art baselines either fail to compute Nash equilibria or compute ones with comparatively lower social welfare.
\end{abstract}

\section{Introduction}
\label{Sec:Intro}

Reinforcement learning (RL) is an effective strategy for automatically synthesizing controllers for challenging control problems. As a consequence, there has been interest in applying RL to multi-agent systems. For example, RL has been used to coordinate agents in cooperative systems to accomplish a shared goal~\cite{neary2021reward}. Our focus is on non-cooperative systems, where the agents are trying to achieve their own goals~\cite{kearns2000fast}; for such systems, the goal is typically to learn a policy for each agent such that the joint strategy forms a Nash equilibrium.

A key challenge facing existing approaches is how tasks are specified. First, they typically require that the task for each agent is specified as a reward function. However, reward functions tend to be very low-level, making them difficult to manually design; furthermore, they often obfuscate high-level structure in the problem known to make RL more efficient in the single-agent~\cite{icarte2018using} and cooperative~\cite{neary2021reward} settings. Second, they typically focus on computing an arbitrary Nash equilibrium. However, in many settings, the user is a social planner trying to optimize the overall social welfare of the system, and most existing approaches are not designed to optimize social welfare.



We propose a novel multi-agent RL framework for learning policies from high-level specifications (one specification per agent) such that the resulting joint policy (i) has high social welfare, and (ii) is an $\epsilon$-Nash equilibrium (for a given $\epsilon$). We formulate this problem as a constrained optimization problem where the goal is to maximize social welfare under the constraint that the joint policy is an $\epsilon$-Nash equilibrium. 

Our algorithm for solving this optimization problem uses an enumerative search strategy. First, it enumerates candidate policies in decreasing order of social welfare. To ensure a tractable search space, it restricts to policies that conform to the structure of the user-provided specification. Then, for each candidate policy, it
uses an explore-then-exploit self-play RL algorithm~\cite{pmlr-v119-bai20a} to compute \emph{punishment strategies} that are triggered when some agent deviates from the original joint policy. It also computes the maximum benefit each agent derives from deviating, which can be used to determine whether the joint policy augmented with punishment strategies forms an $\epsilon$-Nash equilibrium; if so, it returns the joint policy.

Intuitively, the enumerative search tries to optimize social welfare, whereas the self-play RL algorithm checks whether the $\epsilon$-Nash equilibrium constraint holds. Since this RL algorithm comes with PAC (Probably Approximately Correct) guarantees, our algorithm is guaranteed to return an $\epsilon$-Nash equilibrium with high probability.  {In summary, our contributions are as follows.

\begin{itemize}
    \item We study the problem of maximizing social welfare under the constraint that the policies form an $\epsilon$-NE. To the best of our knowledge, this problem has not been studied before in the context of learning (beyond single-step games).
    \item We provide an enumerate-and-verify framework for solving the said problem.
    \item We propose a verification algorithm with a probabilistic soundness guarantee in the RL setting of probabilistic systems with unknown transition probabilities.
\end{itemize}}


\paragraph{Motivating example.}
\begin{figure}[t]
    \centering
    \includegraphics[width=0.4\linewidth]{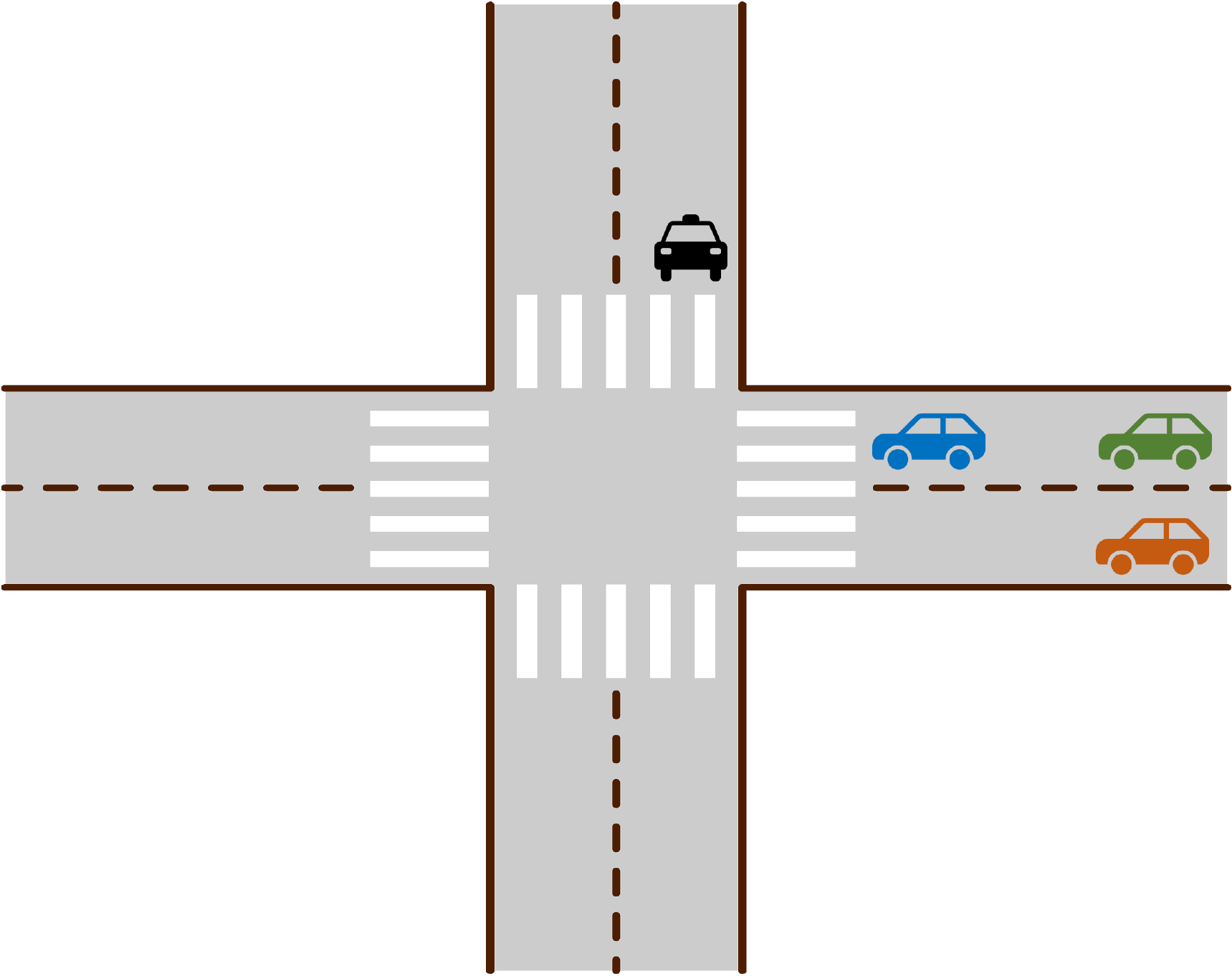}
    \caption{Intersection Example}
    \label{fig:intersection}
\end{figure}
Consider the road intersection scenario in Figure~\ref{fig:intersection}. There are four cars; three are traveling east to west and one is traveling north to south. At any stage, each car can either move forward one step or stay in place. Suppose each car's specification is as follows:
\begin{itemize}[itemsep=0pt]
\item \emph{Black car:} Cross the intersection before the green and orange cars.
\item \emph{Blue car:} Cross the intersection before the black car and stay a car length ahead of the green and orange cars.
\item \emph{Green car:} Cross the intersection before the black car.
\item \emph{Orange car:} Cross the intersection before the black car.
\end{itemize}
We also require that the cars do not crash into one another.

Clearly, not all agents can achieve their goals. The next highest social welfare is for three agents to achieve their goals. In particular, one possibility is that all cars except the black car achieve their goals. However, the corresponding joint policy requires that the black car does not move, which is not a Nash equilibrium---there is always a gap between the blue car and the other two cars behind, so the black car can deviate by inserting itself into the gap to achieve its own goal. Our algorithm uses self-play RL to optimize the policy for the black car, and finds that the other agents cannot prevent the black car from improving its outcome in this way. Thus, it correctly rejects this joint policy. Eventually, our algorithm computes a Nash equilibrium in which the black and blue cars achieve their goals.


\subsection{Related Work} 

\paragraph{Multi-agent RL.} There has been work on learning Nash equilibria in the multi-agent RL setting \cite{hu2003nash, hu1998multiagent, littman2001friend, pmlr-v54-perolat17a, prasad2015two, akchurina2008multi}; however, these approaches focus on learning an arbitrary equilibrium and do not optimize social welfare. There has also been work on studying weaker notions of equilibria in this context \cite{zinkevich2006cyclic, greenwald2003correlated}, as well as work on learning Nash equilibria in two agent zero-sum games \cite{pmlr-v119-bai20a, wei2017online, littman1994markov}.

\paragraph{RL from high-level specifications.} There has been recent work on using specifications based on temporal logic for specifying RL tasks in the single agent setting;
a comprehensive survey may be found in~\cite{alur2021framework}. There has also been recent work on using temporal logic specifications for multi-agent RL \cite{hammond2021, neary2021reward}, but these approaches focus on cooperative scenarios in which there is a common objective that all agents are trying to achieve.

\paragraph{Equilibrium  in Markov games.} There has been work on computing Nash equilibrium in Markov games~\cite{kearns2000fast, shapley1953stochastic}, including work on computing $\epsilon$-Nash equilibria from logical specifications~\cite{chatterjee2004nash,chatterjee2005two}, as well as recent work focusing on computing welfare-optimizing Nash equilibria from temporal specifications~\cite{kwiatkowska2019equilibria,kwiatkowska2020prism}; however, all these works focus on the planning setting where the transition probabilities are known. 
Checking for existence of Nash equilibrium, even in deterministic games, has been shown to be NP-complete for  reachability objectives~\cite{bouyer2010nash}.

\paragraph{Social welfare.} There has been work on computing welfare maximizing Nash equilibria for bimatrix games, which are two-player one-step Markov games with known transitions \cite{czumaj2015approximate, best-nash}; in contrast, we study this problem in the context of general Markov games.

\section{Preliminaries}
\label{Sec:problem}

\subsection{Markov Game}

\sloppy
We consider an $n$-agent Markov game $\M = (\S, \A, P, H, s_0)$ with a finite set of states $\S$, actions $\A = A_1\times\cdots\times A_n$  where $A_i$ is a finite set of actions available to agent $i$, transition probabilities $P(s'\mid s,a)$ for $s,s'\in\S$ and $a\in\A$, finite horizon $H$, and initial state $s_0$~\cite{littman1994markov}. A \emph{trajectory} $\zeta\in\traj=(\S\times\A)^*\times \S$ is a finite sequence $\zeta=s_0\xrightarrow{a_0}s_1\xrightarrow{a_1}\cdots\xrightarrow{a_{t-1}}s_t$ where $s_k \in \S$, ${a_k} \in \A$; we use $|\zeta| = t$ to denote the length of the trajectory $\zeta$ and $a_k^i\in A_i$ to denote the action of agent $i$ in $a_k$.

For any $i\in[n]$, let $\D(A_i)$ denote the set of distributions over $A_i$---i.e., $\D(A_i) = \{\Delta:A_i\to[0,1]\mid \sum_{a_i\in A_i}\Delta(a_i) = 1\}$. A \emph{policy} for agent $i$ is a function $\pi_i:\traj\to \D(A_i)$ mapping trajectories to distributions over actions. 
A policy $\pi_i$ is \emph{deterministic} if for every $\zeta\in\traj$, there is an action $a_i\in A_i$ such that $\pi_i(\zeta)(a_i) = 1$; in this case, we also use $\pi_i(\zeta)$ to denote the action $a_i$. A {\em joint policy}  $\pi: \traj\to \D(A)$ maps finite trajectories to distributions over joint actions. We use $(\pi_1,\dots, \pi_n)$ to denote the joint policy in which agent $i$ chooses its action in accordance to $\pi_i$. We denote by $\D_{\pi}$ the distribution over $H$-length trajectories in $\M$ induced by $\pi$.

We consider the reinforcement learning setting in which we do not know the probabilities $P$ but instead only have access to a simulator of $\M$. Typically, we can only sample trajectories of $\M$ starting at $s_0$. 
Some parts of our algorithm are based on an assumption which allows us to obtain sample trajectories starting at any state that has been observed before. For example, if taking action $a_0$ in $s_0$ leads to a state $s_1$, we assume we can obtain future samples starting at $s_1$.
\begin{assumption}\label{assump:model}
We can obtain samples from $P(\cdot\mid s,a)$ for any previously observed state $s$ and any action $a$.
\end{assumption}

\subsection{Specification Language}

We consider the specification language \spectrl to express  agent specifications. We choose \spectrl since there is existing work on leveraging the structure of \spectrl specifications for single-agent RL~\cite{jothimurugan2021compositional}. However, we believe our algorithm can be adapted to other specification languages as well.

Formally, a \spectrl specification is defined over a set of \emph{atomic predicates} ${\P}_0$, where every $p \in {\P}_0$ is associated with a function $\semantics{p}:\S\to\mathbb{B}=\{\true, \false\}$; we say a state $s$ \emph{satisfies} $p$ (denoted $s\models p$) if and only if $\semantics{p}(s)=\true$. 
%
The set of \emph{predicates} $\mathcal{P}$ consists of conjunctions and disjunctions of atomic predicates. The syntax of a predicate $b\in\mathcal{P}$ is given by the grammar
$
b ~::=~ p \mid (b_1 \wedge b_2) \mid (b_1 \vee b_2),
$
where $p\in\mathcal{P}_0$. Similar to atomic predicates, each predicate $b\in\mathcal{P}$ corresponds to a function $\semantics{b}:\S\to\mathbb{B}$ defined naturally over Boolean logic. 
Finally, the syntax of \spectrl is given by
\footnote{Here, \code{achieve} and \code{ensuring} correspond to the ``eventually'' and ``always'' operators in temporal logic.}
\begin{align*}
\p ~::=~ \eventually{b} \mid \p_1 \always{b} \mid \p_1; \p_2 \mid \choice{\p_1}{\p_2},
\end{align*}
where $b\in\mathcal{P}$. Each specification $\phi$ corresponds to a function $\semantics{\phi}:\traj\to\mathbb{B}$, and we say $\zeta\in\traj$ satisfies $\phi$ (denoted $\zeta\models\phi$) if and only if $\semantics{\phi}(\zeta)=\true$.
Letting $\zeta$ be a finite trajectory of length $t$, this function is defined by
\begin{align*}
\zeta&\models\eventually{b} ~&&\text{if}~ \exists\ i \leq t,~s_i\models b \\
\zeta&\models \p \always{b} ~&&\text{if}~ \zeta\models\p ~\text{and}~ \forall\ i\leq t, ~ s_i\models b \\
\zeta&\models\p_1; \p_2 ~&&\text{if}~ \exists\ i < t, ~\zeta_{0:i}\models \p_1 ~\text{and}~ \zeta_{i+1:t}\models\p_2 \\
\zeta&\models\choice{\p_1}{\p_2} ~&&\text{if}~ \zeta\models\p_1 ~\text{or}~ \zeta\models\p_2.
\end{align*}
Intuitively, the first clause means that the trajectory should eventually reach a state that satisfies the predicate $b$. The second clause says that the trajectory should satisfy specification $\p$ while always staying in states that satisfy $b$. The third clause says that the trajectory should sequentially satisfy $\p_1$ followed by $\p_2$. The fourth clause means that the trajectory should satisfy either $\p_1$ or $\p_2$.

\subsection{Abstract Graphs}
\spectrl specifications can be represented by {\em abstract graphs} {which are DAG-like structures in which each vertex represents a set of states (called subgoal regions) and each edge represents a set of concrete trajectories that can be used to transition from the source vertex to the target vertex without violating safety constraints.} 

\begin{definition}
\rm
An {\em abstract graph} $\G = (U,E,u_0,F,\beta,\traj_{\safe})$ is a directed acyclic graph (DAG) with vertices $U$,
(directed) edges $E\subseteq U\times U$, initial vertex $u_0\in U$, final vertices $F\subseteq U$, subgoal region map $\beta:U\to2^S$ such that for each $u\in U$, $\beta(u)$ is a subgoal region,\footnote{We do not require that the subgoal regions partition the state space or that they be non-overlapping.} and \emph{safe trajectories}
$
\traj_\safe = \bigcup_{e \in E}\traj_\safe^e\cup\bigcup_{f \in F}\traj_\safe^f,
$
where $\traj_\safe^e\subseteq\traj$ denotes the safe trajectories for edge $e \in E$ and $\traj_\safe^f\subseteq\traj$ denotes the safe trajectories for final vertex $f\in F$.
\end{definition}
Intuitively, $(U,E)$ is a standard DAG, and $u_0$ and $F$ define a graph reachability problem for $(U,E)$. Furthermore, $\beta$ and $\traj_{\safe}$ connect $(U,E)$ back to the original MDP $\M$; in particular, for an edge $e=u\to u'$, $\traj_{\safe}^e$ is the set of safe trajectories in $\M$ that can be used to transition from $\beta(u)$ to $\beta(u')$. 
\begin{definition}
\rm
A trajectory $\zeta=s_0\xrightarrow{a_0}s_1\xrightarrow{a_1}\cdots\xrightarrow{a_{t-1}}s_t$ in $\M$ satisfies the abstract graph $\G$ (denoted $\zeta\models \G$) if there is a sequence of indices $0=k_0\leq k_1<\cdots<k_\ell\leq t$ and a path $\rho=u_0\to u_1\to\cdots\to u_\ell$ in $\G$ such that
\begin{itemize}[topsep=0pt,itemsep=0ex,partopsep=1ex,parsep=1ex]
\item $u_\ell\in F$,
\item for all $z\in\{0,\ldots,\ell\}$, we have $s_{k_z}\in \beta(u_z)$,
\item for all $z < \ell$, letting $e_z=u_z\to u_{z+1}$, we have $\zeta_{k_z:k_{z+1}}\in\traj_{\safe}^{e_z}$, and
\item $\zeta_{k_\ell:t}\in\traj_\safe^{u_\ell}$.
\end{itemize}
\end{definition}
The first two conditions state that the trajectory should visit a sequence of subgoal regions corresponding to a path from the initial vertex to some final vertex, and the last two conditions state that the trajectory should be composed of subtrajectories that are safe according to $\traj_\safe$.



Prior work shows that for every  \spectrl specification $\p$, we can construct an abstract graph $\G_\p$ such that for every trajectory $\zeta\in\traj$, $\zeta\models\p$ if and only if $\zeta\models\G_\p$~\cite{jothimurugan2021compositional}. Finally, the number of states in the abstract graph is linear in the size of the specification. 


\subsection{Nash Equilibrium and Social Welfare}

Given a Markov game $\M$ with unknown transitions and \spectrl specifications $\p_1,\dots, \p_n$ for the $n$ agents respectively, the score of agent $i$ from a joint policy $\pi$ is given by 
$$J_i(\pi) = \Pr_{\zeta\sim\D_{\pi}}[\zeta\models \p_i].$$
 
Our goal is to compute a {\em high-value} $\epsilon$-Nash equilibrium in $\M$ w.r.t these scores. Given a joint policy $\pi = (\pi_1,\dots, \pi_n)$ and an alternate policy $\pi'_i$ for agent $i$, let $(\pi_{-i},\pi'_i)$ denote the joint policy $ (\pi_1, \dots , \pi'_i, \dots, \pi_n)$. Then, a joint policy $\pi$ is an \emph{$\epsilon$-Nash equilibrium} if for all agents $i$ and all alternate policies $\pi'_i$, $J_i(\pi)\geq J_i((\pi_{-i},\pi'_i)) - \epsilon$. Our goal is to compute a joint policy $\pi$ that maximizes the social welfare given by
$$\operatorname*{\code{welfare}}(\pi) = \frac{1}{n}\sum_{i=1}^n J_i(\pi)$$ subject to the constraint that
$\pi$ is an $\epsilon$-Nash equilibrium.

\section{Overview}

Our framework for computing a high-welfare $\epsilon$-Nash equilibrium consists of two phases. The first phase is a {\em prioritized enumeration} procedure that learns deterministic joint policies in the environment and ranks them in decreasing order of social welfare. The second phase is a {\em verification phase} that checks whether a given joint policy can be extended to an $\epsilon$-Nash equilibrium by adding punishment strategies. A policy is returned if it passes the verification check in the second phase. Algorithm~\ref{Alg:Overall} summarizes our framework.

For the enumeration phase, it is impractical to enumerate all joint policies even for small environments, since the total number of deterministic joint policies is $\Omega(|\A|^{|\S|^{H-1}})$, which is $\Omega(2^{n|\S|^{H-1}})$ if each agent has atleast two actions. Thus, in the prioritized enumeration phase, we apply a specification-guided heuristic to reduce the number of joint policies considered.
The resulting search space is independent of $|\S|$ and $H$, depending only on the specifications $\{\p_i\}_{i\in[n]}$.
Since the transition probabilities are unknown, these joint policies are trained using an efficient compositional RL approach.

Since the joint policies are trained cooperatively, they are typically not $\epsilon$-Nash equilibria. Hence, in the verification phase, we use a probably approximately correct (PAC) procedure (Algorithm~\ref{alg:NEverification}) to determine whether a given joint policy can be modified by adding \emph{punishment strategies} to form an $\epsilon$-Nash equilibrium. Our approach is to reduce this problem to solving two-agent zero-sum games. The key insight is that for a given joint policy to be an $\epsilon$-Nash equilibrium, unilateral deviations by any agent must be successfully punished by the coalition of all other agents. In such a \emph{punishment game}, the deviating agent attempts to maximize its score while the coalition of other agents attempts to minimize its score, leading to a competitive min-max game between the agent and the coalition. If the deviating agent can improve its score by a margin $\ge\epsilon$, then the joint policy cannot be extended to an $\epsilon$-Nash equilibrium. Alternatively, if no agent can increase its score by a margin $\ge\epsilon$, then the joint policy (augmented with punishment strategies) is an $\epsilon$-Nash equilibrium. Thus, checking if a joint policy can be converted to an $\epsilon$-Nash equilibrium reduces to solving a two-agent zero-sum game for each agent.
Each punishment game is solved using a self-play RL algorithm for learning policies in min-max games with unknown transitions~\cite{pmlr-v119-bai20a}, after converting specification-based scores to  reward-based scores. While the initial joint policy is deterministic, the punishment strategies can be probabilistic. 

Overall, we provide the guarantee that with high probability, if our algorithm returns a joint policy, it will be an $\epsilon$-Nash equilibrium.

\begin{algorithm}[t]
\begin{algorithmic}[1]
\STATE $\prioritizedpolicies \gets \textsc{PrioritizedEnumeration}(\M,\p_1,\dots,\p_n)$ \label{algline:overview:policyenumeration1}
\FOR{joint policy $\pi \in \prioritizedpolicies$}\label{algline:overview:forpolicyenumeration}
    \STATE{\color{blue} // Can $\pi$ be extended to an $\epsilon$-NE?}
    \STATE $\code{isNash}, \tau \gets \textsc{VerifyNash}(\M, \pi, \p_1,\cdots,\p_n,\epsilon, \delta, p)$ \label{algline:verification-begin}
    \STATE \textbf{if} $\code{isNash}$ \textbf{then} \textbf{return} $\pi\Join\tau$ {\color{blue} // Add punishment strategies}\label{algline:verification-end}
\ENDFOR 
\STATE{\textbf{return} No $\epsilon$-NE found}
\caption{\nash\\
\textbf{Inputs:} Markov game (with unknown transition probabilities) $\M$ with $n$-agents, agent specifications $\p_1, \dots,\p_n$, Nash factor $\epsilon$, precision $\delta$, failure probability $p$.\\
\textbf{Outputs:} $\epsilon$-NE, if found.}
\label{Alg:Overall}
\end{algorithmic}
\end{algorithm}

\sloppy

\section{Prioritized Enumeration}
\label{Sec:search}

We summarize our specification-guided compositional RL algorithm for learning a finite number of deterministic joint policies in an unknown environment under Assumption~\ref{assump:model}; details are in Appendix~\ref{app:search}. 
These policies are then ranked in decreasing order of their (estimated) social welfare.

Our learning algorithm harnesses the structure of specifications, exposed by their abstract graphs, to curb the number of joint policies to learn. 
For every set of {\em active agents} $B \subseteq [n]$, we construct a product abstract graph, from the abstract graphs of {all} active agents' specifications. A property of this product is that if a trajectory $\zeta$ in $\M$ corresponds to a path in the product that ends in a final state then $\zeta$ satisfies the specification of \emph{all} active agents. Then, our procedure learns one joint policy for every path in the product graph that reaches a final state. {Intuitively, policies learned using the product graph corresponding to a set of active agents $B$ aim to maximize satisfaction probabilities of all agents in $B$.} By learning joint policies for every set of active agents, we are able to learn policies under which some agents may not satisfy their specifications. This enables learning joint policies in non-cooperative settings. Note that the number of paths (and hence the number of policies considered) is independent of $|\S|$ and $H$, and depends only on the number of agents and their specifications.  

One caveat is that the number of paths may be exponential in the number of states in the product graph. It would be impractical to na\"{i}vely learn a joint policy for every path. Instead, we design an efficient compositional RL algorithm that learns a joint policy for each edge in the product graph; these edge policies are then composed together to obtain joint policies for paths in the product graph. 

\subsection{Product Abstract Graph}

\begin{figure}[t]
	\centering
	\begin{minipage}{0.40\textwidth}
		\centering
		\begin{tikzpicture}[shorten >=1pt,node distance=3.5cm,on grid,auto] 
		
		\node[state] (q_0)   { \footnotesize $u_1$}; 
        \node[state, accepting] (q_1) [right=of q_0] {\footnotesize $v_1$}; 
		
		\path[->] 
		(q_0)   edge node [above]{ \footnotesize $\mathcal{Z}_1 = \mathcal{Z}_{\mathsf{no\_collision}}$}  (q_1);
		\end{tikzpicture}
		\caption{Abstract Graph of black car.}
		\label{Fig:BlackCar}
		\bigskip
		
		\begin{tikzpicture}[shorten >=1pt,node distance=3.5cm,on grid,auto] 
		
		\node[state] (q_0)   { \footnotesize $u_2$}; 
        \node[state, accepting] (q_1) [right=of q_0] {\footnotesize $v_2$}; 
		
		\path[->] 
		(q_0)   edge node [above]{ $\mathcal{Z}_2=\mathcal{Z}_{\mathsf{no\_collision}}\cap$}    (q_1)
		edge node [below]{$\mathcal{Z}_{\mathsf{distance}\_\mathsf{{GreenOrange}}}$}    (q_1);
		\end{tikzpicture}
		\caption{Abstract Graph of blue car.}
		\label{Fig:BlueCar}
	\end{minipage}
	\hfill
	\centering
	\begin{minipage}{0.5\textwidth}
		\centering
		\begin{tikzpicture}[shorten >=1pt,node distance=2.35cm,on grid,auto] 
		\node[state] (q_0)   { \footnotesize $u_1, u_2$}; 
		\node[state] (q_1) [above right=of q_0] {\footnotesize $u_1, v_2$}; 
		\node[state] (q_2) [below right=of q_0] {\footnotesize $v_1, u_2$}; 
    	\node[state, accepting] (q_3) [right=3.5cm of q_0] {\footnotesize $v_1, v_2$}; 

		\path[->] 
		(q_0)  edge node {  $\first(\mathcal{Z}_1) \cap \mathcal{Z}_2$
		}  (q_1)
		edge node [left] {
		$\mathcal{Z}_1 \cap \first(\mathcal{Z}_2)\ $
		}  (q_2)
		edge node {
		$\mathcal{Z}_1 \cap \mathcal{Z}_2$
		}  (q_3)
		(q_1) edge node {
		$\mathcal{Z}_1 \cap \mathcal{Z}^{v_2}$
		}  (q_3)
		(q_2) edge node [right] { $\ \mathcal{Z}^{v_1} \cap \mathcal{Z}_2$
		}  (q_3);
		
		\end{tikzpicture}
		\caption{Product Abstract Graph of black and blue cars.
		$\mathcal{Z}^{v_1}$ and $\mathcal{Z}^{v_2}$ refer to safe trajectories after the black and blue cars have reached their final states, respectively.}
		\label{Fig:Product}
	\end{minipage}
\end{figure}

Let $\p_1,\dots, \p_n$ be the specifications for the $n$-agents, respectively, and let $\G_{i} = (U_i, E_i, u_0^i, F_i, \beta_i, \ol{\Zc}_{\safe,i})$ be the abstract graph of specification $\p_i$ in the environment $\M$. We construct a product abstract graph for every set of active agents in $[n]$.
The product graph for a set of active agents $B \subseteq [n]$ is used to learn joint policies which satisfy the specification of all agents in $B$ with high probability.

\begin{definition}
Given a set of agents $B  = \{i_1, \dots, i_m\} \subseteq [n]$,  the product graph $\G_B = (\ol{U}, \ol{E}, \ol{u}_0, \ol{F}, \ol{\beta}, \ol{\Zc}_{\safe}) $ is the asynchronous product of $\G_{i}$ for all $i \in B$, with
\begin{itemize}
    \item $\ol{U} = \prod_{i\in B}U_{i}$ is the set of product vertices,
    \item An edge $e = (u_{i_1},\ldots,u_{i_m})\to(v_{i_1},\ldots,v_{i_m})\in \ol{E}$ if at least for one agent $i\in B$ the edge  $u_{i}\to v_{i}\in E_{i}$ and for the remaining agents, $u_{i} = v_{i}$,
    \item $\ol{u}_0 = (u_0^{i_1},\ldots,u_0^{i_m})$ is the initial vertex,
    \item $\ol{F}= \Pi_{i\in B}F_{i}$ is the set of final vertices,
    \item $\ol{\beta} = (\beta_{i_1},\ldots,\beta_{i_m})$ is the collection of concretization maps, and
    \item $\ol{\traj}_{\safe} = (\ol{\traj}_{\safe,i_1},\ldots,\ol{\traj}_{\safe,i_m})$ is the collection of safe trajectories.
\end{itemize}

\end{definition}
 
We denote the $i$-th component of a product vertex $\ol{u}\in \ol{U}$ by $u_i$ for agent $i\in B$.
Similarly, the $i$-th component in an edge $e = \ol{u}\to\ol{v}$ is denoted by $e_i = u_i\to v_i$ for $i \in B$; note that $e_i$ can be a self loop which is not an edge in $\G_i$. For an edge $e\in \ol{E}$, we denote the set of agents $i\in B$ for which $e_i \in E_{i}$, and not a self loop,  by $\progress(e)$.

Abstract graphs of the black car and the blue car from the motivating example are shown in Figures~\ref{Fig:BlackCar} and \ref{Fig:BlueCar} respectively. The vertex $v_1$ denotes the subgoal region $\beta_{\text{black}}(v_1)$ consisting of states in which the black car has crossed the intersection but the orange and green cars have not. The subgoal region $\beta_{\text{blue}}(v_2)$ is the set of states in which the blue car has crossed the intersection but the black car has not. $\traj_1$ denotes trajectories in which the black car does not collide and $\traj_2$ denotes trajectories in which the blue car does not collide and stays a car length ahead of the orange and green cars. The product abstract graph for the set of active agents $B = \{\text{black, blue}\}$ is shown in Fig~\ref{Fig:Product}. The safe trajectories on the edges reflect the notion of \emph{achieving} a product edge which we discuss below.

A trajectory $\zeta = s_0\xrightarrow{a_0}s_1\xrightarrow{a_1}\cdots\xrightarrow{a_{t-1}}s_t$ \emph{achieves} an edge $e = \ol{u}\to \ol{v}$ in $\G_B$  if all progressing agents $i\in\progress(e)$ reach their target subgoal region $\beta_i(v_i)$ along the trajectory and the trajectory is safe for all agents in $B$. 
For a progressing agent $i\in \progress(e)$, the initial segment of the rollout until the agent reaches its subgoal region should be safe with respect to the edge $e_i$. 
After that, the rollout should be safe with respect to every future possibility for the agent. This is required to ensure continuity of the rollout into adjacent edges in the product graph $\G_B$. For the same reason, we require that the entire rollout is safe with respect to all future possibilities for non-progressing agents. Note that we are not concerned with non-active agents in $[n]\setminus B$. In order to formally define this notion, we need to setup some notation.

For a predicate $b\in\P$, let the set of safe trajectories w.r.t. $b$ be given by $\traj_b = \{\zeta = s_0\xrightarrow{a_0}s_1\xrightarrow{a_1}\cdots\xrightarrow{a_{t-1}}s_t\in \traj\mid \forall\ 0\leq k \leq t, s_k \models b \}$. It is known that safe trajectories along an edge in an abstract graph constructed from a \spectrl specification is either of the form $\traj_b$ or $\traj_{b_1} \circ \Zc_{b_2}$, where $b, b_1, b_2\in\P$ and $\circ$ denotes concatenation \cite{jothimurugan2021compositional}. In addition, for every final vertex $f$, $\traj_{\safe}^f$ is of the form $\traj_b$ for some $b\in\P$. We define $\first$ as follows: 
\begin{align*}
    \first(\Zc') & = 
    \begin{cases}
        \Zc_b,  &\text{if } \Zc' = \Zc_{b}\\
        \Zc_{b_1}, &\text{if } \Zc' = \Zc_{b_1}\circ \Zc_{b_2}
    \end{cases}
\end{align*}

We are now ready to define the notion of satisfiability of a product edge.

\begin{definition}
\label{def:edgerollout}
\rm
A rollout $\zeta = s_0\xrightarrow{a_0}s_1\xrightarrow{a_1}\cdots\xrightarrow{a_{t-1}}s_k$ \emph{achieves an edge} $e = \ol{u}\to \ol{v}$ in $\G_B$ (denoted $\zeta\models_B e$) if
\begin{enumerate}
    \item for all progressing agents $i\in \progress(e)$, there exists an index $k_i\leq k$ such that $s_{k_i}\in \beta_i(v_i)$ and $\zeta_{0:k_i} \in \traj^{e_i}_{\safe, i}$. If $v_i \in F_i$ then $\zeta_{k_i:k} \in \traj^{v_i}_{\safe, i}$.  Otherwise,  $\zeta_{k_i:k} \in \mathsf{First}(\traj^{v_i\rightarrow w_i}_{\safe,i})$ for all $ w_i \in \outgoing(v_i)$. Furthermore, we require $k_i > 0$ if $u_i\neq u_0^i$.
    
    \item for all non-progressing agents $i \in B\setminus\progress(e)$, if $u_i \notin F_i$, $\zeta\in\mathsf{First}(\traj_{\safe,i}^{u_i\rightarrow w_i})$ for all $w_i \in \outgoing(u_i)$. Otherwise (if $u_i \in F_i$), $\zeta \in \traj^{u_i}_{\safe, i}$
\end{enumerate}
 
\end{definition}

We can now define what it means for a trajectory to achieve a path in the product graph $\G_B$.
\begin{definition}
\label{def:achievepath}
\rm
Given $B\subseteq [n]$, a rollout $\zeta = s_0\to\cdots\to s_t $ {\em achieves} a path $\rho = \ol{u}_0 \to \cdots \to \ol{u}_{\ell}$ in $\G_B$ (denoted $\zeta \models_B \rho$) if there exists indices $0=k_0 \leq k_1 \leq \dots \leq k_\ell \leq t$ such that (i) $\ol{u}_{\ell}\in \ol{F}$, (ii) $\zeta_{k_z:k_{z+1}}$ achieves $\ol{u}_{z}\to \ol{u}_{z+1}$ for all $0\leq z< \ell$, and (iii) $\zeta_{k_\ell:t} \in \Zc_{\safe,i}^{u_{\ell,i}}$ for all $i\in B$.
\end{definition}

\begin{theorem}
\label{thrm:productgraph}
Let $\rho = \ol{u_0} \to \ol{u_1} \to \cdots \to \ol{u_\ell}$ be a path in the product abstract graph $\G_B$ for $B \subseteq [n]$. Suppose trajectory $\zeta \models_B \rho$. Then $\zeta \models \p_i$ ~for all $i \in B$.
\end{theorem}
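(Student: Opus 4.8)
By the abstract-graph characterization of \spectrl specifications~\cite{jothimurugan2021compositional}, $\zeta \models \p_i$ if and only if $\zeta \models \G_i$, so it suffices to fix an arbitrary $i \in B$ and show that $\zeta \models_B \rho$ forces $\zeta \models \G_i$. The plan is to (1) project the product path $\rho$ onto agent $i$ to obtain a path $\rho_i$ in $\G_i$, (2) read off from the ``achieves an edge'' data supplied by Definition~\ref{def:edgerollout} the indices along $\zeta$ that witness $\zeta \models \G_i$, and (3) verify the subgoal-membership and safe-subtrajectory requirements in the definition of $\zeta \models \G_i$.

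\textbf{Projection and witnessing indices.} Write $\rho = \ol{u}_0 \to \cdots \to \ol{u}_\ell$ and $e_z = \ol{u}_z \to \ol{u}_{z+1}$. Looking only at the $i$-th components, consecutive vertices $u_{z,i}$ and $u_{z+1,i}$ are equal when $i \notin \progress(e_z)$ and are joined by an edge of $E_i$ otherwise; collapsing the repetitions yields a path $\rho_i = w_0 \to \cdots \to w_r$ in $\G_i$ with $w_0 = u_0^i$ (since $\ol{u}_0$ is the product initial vertex) and $w_r = u_{\ell,i} \in F_i$ (since $\ol{u}_\ell \in \ol{F} = \prod_{j \in B} F_j$). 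Let $z_1 < \cdots < z_r$ enumerate the steps at which $i$ progresses, so $(e_{z_j})_i = w_{j-1} \to w_j$. Fix indices $0 = k_0 \le \cdots \le k_\ell \le t$ as granted by Definition~\ref{def:achievepath}. Applying Definition~\ref{def:edgerollout}(1) to the segment $\zeta_{k_{z_j}:k_{z_j+1}}$, which achieves $e_{z_j}$, with $i \in \progress(e_{z_j})$ gives an index $n_j$ with $k_{z_j} \le n_j \le k_{z_j+1}$, $s_{n_j} \in \beta_i(w_j)$, and $\zeta_{k_{z_j}:n_j} \in \traj^{w_{j-1}\to w_j}_{\safe,i}$; set $n_0 := 0$. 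Because the $k$'s are nondecreasing and $k_{z_j+1} \le k_{z_{j+1}}$, the $n_j$ are nondecreasing with $n_r \le t$; and because $w_{j-1} \ne u_0^i$ for $j \ge 2$ (a DAG has no path returning to the source of an outgoing edge), the side condition ``$k_i > 0$ if $u_i \ne u_0^i$'' sharpens this to $0 = n_0 \le n_1 < n_2 < \cdots < n_r$, which is exactly the index shape demanded by $\zeta \models \G_i$. Subgoal membership holds at every $n_j$: for $j \ge 1$ it was just obtained, and for $j = 0$ we have $s_0 \in \beta_i(u_0^i)$ since the initial subgoal region of a \spectrl abstract graph contains $s_0$ by construction~\cite{jothimurugan2021compositional}.

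\textbf{Safe subtrajectories (the crux).} It remains to check $\zeta_{n_{j-1}:n_j} \in \traj^{w_{j-1}\to w_j}_{\safe,i}$ for every $1 \le j \le r$ and $\zeta_{n_r:t} \in \traj^{w_r}_{\safe,i}$. Split $\zeta_{n_{j-1}:n_j}$ along the indices $k_z$ into three kinds of pieces: the leftover tail $\zeta_{n_{j-1}:k_{z_{j-1}+1}}$ of step $z_{j-1}$ after agent $i$ reached $\beta_i(w_{j-1})$ (absent when $j = 1$); the full segments $\zeta_{k_z:k_{z+1}}$ for the intermediate steps with $z_{j-1} < z < z_j$ (resp.\ $0 \le z < z_1$ when $j=1$), at all of which $i$ is non-progressing at vertex $w_{j-1}$; and the head $\zeta_{k_{z_j}:n_j}$. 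By Definition~\ref{def:edgerollout}(1) the tail lies in $\first(\traj^{w_{j-1}\to w_j}_{\safe,i})$ (instantiating the outgoing edge as $w_{j-1}\to w_j$), by Definition~\ref{def:edgerollout}(2) each intermediate segment lies in the same set, and by Definition~\ref{def:edgerollout}(1) the head lies in $\traj^{w_{j-1}\to w_j}_{\safe,i}$. Now use the structural facts~\cite{jothimurugan2021compositional} that such a safe set is either $\traj_b$ or $\traj_{b_1}\circ\traj_{b_2}$ (so its image under $\first$ is $\traj_b$, resp.\ $\traj_{b_1}$), that any set of the form $\traj_{b'}$ is closed under taking subtrajectories and under concatenation of trajectories sharing an endpoint, and that in the $\traj_{b_1}\circ\traj_{b_2}$ case the head already exhibits an admissible split point: concatenating the three pieces then lands in $\traj^{w_{j-1}\to w_j}_{\safe,i}$. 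The final piece $\zeta_{n_r:t}$ decomposes the same way, now using the $F_i$-branches of Definition~\ref{def:edgerollout}(1,2) (legitimate because $w_r \in F_i$), Definition~\ref{def:achievepath}(iii) for the last segment $\zeta_{k_\ell:t}$, and closure of $\traj^{w_r}_{\safe,i} = \traj_b$ under concatenation.

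\textbf{Conclusion and main difficulty.} With $\rho_i$ and the indices $0 = n_0 \le n_1 < \cdots < n_r$ satisfying every clause, we get $\zeta \models \G_i$, hence $\zeta \models \p_i$; as $i \in B$ was arbitrary, $\zeta \models \p_i$ for all $i \in B$. I expect the safety step to be the only real obstacle: one must keep track of which of the two forms each edge's safe-trajectory set has and confirm that the $\first(\cdot)$ and ``for all $w_i \in \outgoing(\cdot)$'' clauses of Definition~\ref{def:edgerollout} were designed precisely so that the concatenation across the (possibly many) product steps that agent $i$ does not participate in stays inside that edge's safe set; everything else is index bookkeeping.
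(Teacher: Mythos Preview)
Your proposal is correct and follows essentially the same route as the paper's proof. Both arguments project the product path onto agent $i$'s abstract graph, extract witnessing indices from Definition~\ref{def:edgerollout} at the steps where $i$ progresses, and verify safety of each segment $\zeta_{n_{j-1}:n_j}$ by splitting it into a tail from the previous progressing step, intermediate non-progressing segments, and a head into the current progressing step, using that $\first(\traj^{e}_{\safe,i})$ is a set of the form $\traj_b$ closed under concatenation. Your write-up is in places more explicit than the paper's (e.g., invoking the ``$k_i>0$ if $u_i\neq u_0^i$'' clause to get strict inequalities $n_1<\cdots<n_r$, and spelling out the $\traj_{b_1}\circ\traj_{b_2}$ case), but the underlying argument is the same.
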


That is, joint policies that maximize the probability of achieving paths in the product abstract graph $\G_B$ have high social welfare w.r.t. the active agents $B$.

\subsection{Compositional RL Algorithm}

Our compositional RL algorithm learns joint policies corresponding to paths in product abstract graphs. For every $B\subseteq [n]$, it learns a joint policy $\pi_e$ for each edge in the product abstract graph $\G_B$, which is the (deterministic) policy that maximizes the probability of achieving $e$ from a given initial state distribution.
We assume all agents are acting cooperatively; thus, we treat the agents as one and use single-agent RL to learn each edge policy. We will check whether any deviation to this co-operative behaviour by any agent can be punished by the coalition of other agents in the verification phase.
The reward function is designed to capture the reachability objective of progressing agents and the safety objective of all active agents. 

The edges are learned in topological order, allowing us to learn an induced state distribution for each product vertex $\ol{u}$ prior to learning any edge policies from $\ol{u}$; this distribution is used as the initial state distribution when learning outgoing edge policies from $\ol{u}$. In more detail, the distribution for the initial vertex of $\G_B$ is taken to be the initial state distribution of the environment; for every other product vertex, the distribution is the average over distributions induced by executing edge policies for all incoming edges. This is possible because the product graph is a DAG.

Given edge policies $\Pi$ along with a path
$
\rho=\ol{u}_{0}\to \ol{u}_{1}\to \cdots\to \ol{u}_{\ell} = \ol{u} \in \ol{F} 
$
in $\G_B$, we define a \emph{path policy} ${\pi}_{\rho}$ to navigate from $\ol{u}_{0}$ to $\ol{u}$. In particular, ${\pi}_{\rho}$ executes $\pi_{e[z]}$, where $e[z] = \ol{u}_{z}\to \ol{u}_{z+1}$ (starting from $z=0$) until the resulting trajectory achieves $e[z]$, after which it increments $z\gets z+1$ (unless $z=\ell$). That is, ${\pi}_{\rho}$ is designed to achieve the sequence of edges in $\rho$. 
Note that $\pi_\rho$ is a finite-state deterministic joint policy in which vertices on the path correspond to the memory states that keep track of the index of the current policy. This way, we obtain finite-state joint policies by learning edge policies only. 

This process is repeated for all sets of active agents $B \subseteq [n]$. These finite-state joint policies are then ranked by estimating their social welfare on several simulations. 
\section{Nash Equilibria Verification}
\label{Sec:neverification}

\begin{figure}[t]
    \centering
    \includegraphics[width=0.6\linewidth]{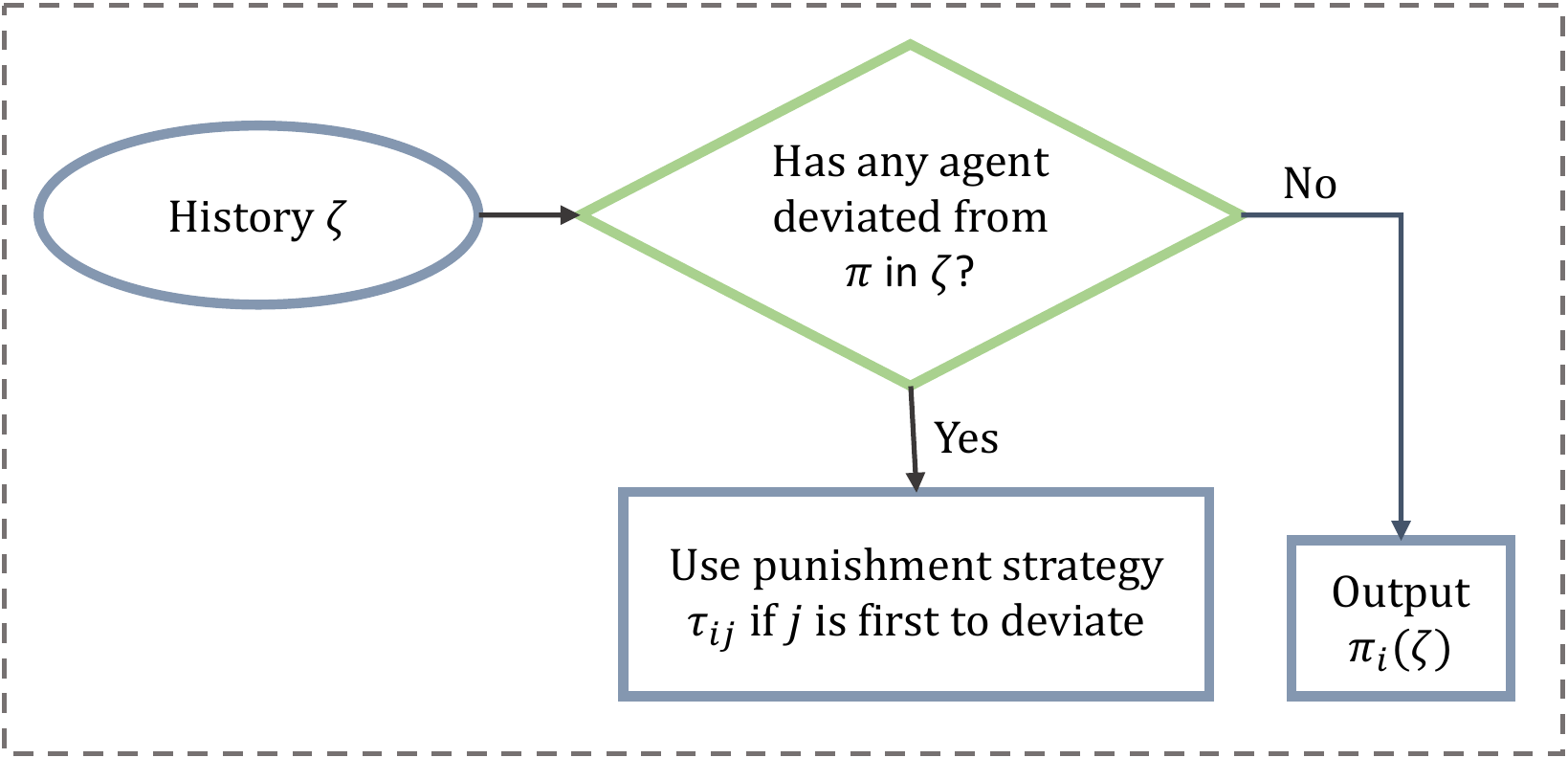}
    \caption{$\pi_i$ augmented with punishment strategies.}
    \label{fig:augmented_policy}
\end{figure}

The prioritized enumearation phase produces a list of path policies which are ranked by the total sum of scores. Each path policy is deterministic and also finite state. Since the joint policies are trained cooperatively, they are typically not $\epsilon$-Nash equilibria.
Thus, our verification algorithm not only tries to prove that a given joint policy is a $\epsilon$-Nash equilibrium, but also tries to modify it so it satisfies this property. In particular,
our verification algorithm attempts to modify a given joint policy by adding \emph{punishment strategies} so that the resulting policy is an $\epsilon$-Nash equilibrium.

Concretely, it takes as input a finite-state deterministic joint policy $\pi = (M, \alpha, \sigma, m_0)$ where $M$ is a finite set of \emph{memory states}, $\alpha: \S\times\A\times M\to M$ is the memory update function, $\sigma:\S\times M \to \A$ maps states to (joint) actions and $m_0$ is the initial policy state. The \emph{extended memory update function} $\hat{\alpha}: \traj \to M$ is given by $\hat{\alpha}(\epsilon) = m_0$ and $\hat{\alpha}(\zeta s_t a_{t}) = \alpha(s_t, a_t, \hat{\alpha}(\zeta))$. Then, $\pi$ is given by $\pi(\zeta s_t) = \sigma(s_t, \hat{\alpha}(\zeta))$. The policy $\pi_i$ of agent $i$ simply chooses the $i^{\text{th}}$ component of $\pi(\zeta)$ for any history $\zeta$.

The verification algorithm learns one punishment strategy $\tau_{ij}:\traj\to\D(A_i)$ for each pair $(i,j)$ of agents. As outlined in Figure~\ref{fig:augmented_policy}, the modified policy for agent $i$ uses $\pi_i$ if every agent $j$ has taken actions according to $\pi_j$ in the past. In case some agent $j'$ has taken an action that does not match the output of $\pi_{j'}$, then agent $i$ uses the punishment strategy $\tau_{ij}$, where $j$ is the agent that deviated the earliest (ties broken arbitrarily). The goal of verification is to check if there is a set of punishment strategies $\{\tau_{ij}\mid i\neq j\}$ such that after modifying each agent's policy to use them, the resulting joint policy is an $\epsilon$-Nash equilibrium.

\subsection{Problem Formulation}

We denote the set of all punishment strategies of agent $i$ by $\tau_i=\{\tau_{ij}\mid j\neq i\}$. We define the composition of $\pi_i$ and $\tau_i$ to be the policy $\tilde{\pi}_i = \pi_i\Join\tau_i$ such that for any trajectory $\zeta = s_0\xrightarrow{a_0}\cdots\xrightarrow{a_{t-1}}s_t$, we have
\begin{itemize}
\item $\tilde{\pi}_i(\zeta) = \pi_i(\zeta)$ if for all $0\leq k < t$, $a_k = \pi(\zeta_{0:k})$---i.e., no agent has deviated so far,
\item $\tilde{\pi}_i(\zeta) = \tau_{ij}(\zeta)$ if there is a $k$ such that (i) $a_k^j\neq \pi_j(\zeta_{0:k})$ and (ii) for all $\ell < k$,  $a_\ell = \pi(\zeta_{0:\ell})$. If there are multiple such $j$'s, an arbitrary but consistent choice is made (e.g., the smallest such $j$).
\end{itemize}
Given a finite-state deterministic joint policy $\pi$, the verification problem is to check if there exists a set of punishment strategies $\tau = \bigcup_i \tau_i$ such that the joint policy $\tilde{\pi} = \pi\Join\tau = ({\pi}_1\Join\tau_1,\ldots,{\pi}_n\Join\tau_n)$ is an $\epsilon$-Nash equilibrium. In other words, the problem is to check if there exists a policy $\tilde{\pi}_i$ for each agent $i$ such that (i) $\tilde{\pi}_i$ follows $\pi_i$ as long as no other agent $j$ deviates from $\pi_j$ and (ii) the joint policy $\tilde{\pi} = (\tilde{\pi}_1,\ldots,\tilde{\pi}_n)$ is an $\epsilon$-Nash equilibrium.

\subsection{High-Level Procedure}

Our approach is to compute the best set of punishment strategies $\tau^*$ w.r.t. $\pi$ and check if $\pi\Join\tau^*$ is an $\epsilon$-Nash equilibrium. The best punishment strategy against agent $j$ is the one that minimizes its incentive to deviate. To be precise, we define the best response of $j$ with respect to a joint policy $\pi' = (\pi_1',\ldots,\pi_n')$ to be
$\bestr_j(\pi') \in \arg\max_{\pi_j''}J_j(\pi_{-j}', \pi_j'')$.
Then, the best set of punishment strategies $\tau^*$ w.r.t. $\pi$ is one that minimizes the value of $\bestr_j(\pi\Join\tau)$ for all $j\in[n]$. To be precise, define $\tau[j] = \{\tau_{ij}\mid i\neq j\}$ to be the set of punishment strategies \emph{against} agent $j$. Then, we want to compute $\tau^*$ such that for all $j$,
\begin{equation}\label{eq:tau_objective}
    \tau^* \in \arg\min_{\tau} J_j((\pi\Join\tau)_{-j}, \bestr_j(\pi\Join\tau)).
\end{equation}
We observe that for any two sets of punishment strategies $\tau$, $\tau'$ with $\tau[j] = \tau'[j]$ and any policy $\pi_j'$, we have $J_j((\pi\Join\tau)_{-j}, \pi_j') = J_j((\pi\Join\tau')_{-j}, \pi_j')$. This is because, for any $\tau$, punishment strategies in $\tau\setminus\tau[j]$ do not affect the behaviour of the joint policy $((\pi\Join\tau)_{-j}, \pi_j')$, since no agent other than agent $j$ will deviate from $\pi$.
Hence, $\bestr_j(\pi\Join\tau)$ as well as $J_j((\pi\Join\tau)_{-j},\bestr_j(\pi\Join\tau))$ are independent of $\tau\setminus\tau[j]$; therefore, we can separately compute $\tau^*[j]$ (satisfying Equation~\ref{eq:tau_objective}) for each $j$ and take $\tau^* = \bigcup_j \tau^*[j]$. The following theorem follows from the definition of $\tau^*$; see Appendix~\ref{sec:best_puni} for a proof.
\begin{theorem}\label{thm:best_pun}
Given a finite-state deterministic joint policy $\pi=(\pi_1,\ldots,\pi_n)$, if there is a set of punishment strategies $\tau$ such that $\pi\Join\tau$ is an $\epsilon$-Nash equilibrium, then $\pi\Join\tau^*$ is an $\epsilon$-Nash equilibrium, where $\tau^*$ is the set of best punishment strategies w.r.t. $\pi$. Furthermore, $\pi\Join\tau^*$ is an $\epsilon$-Nash equilibrium iff for all $j$, $$J_j((\pi\Join\tau^*)_{-j}, \bestr_j(\pi\Join\tau^*))-\epsilon \leq J_j(\pi\Join\tau^*) = J_j(\pi).$$
\end{theorem}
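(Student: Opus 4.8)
The plan is to reduce everything to two elementary facts and then glue them together with the componentwise observation that already precedes the theorem. \emph{Fact (i):} for every set of punishment strategies $\tau$ and every agent $j$, $J_j(\pi\Join\tau)=J_j(\pi)$. This holds because when every agent $i$ plays $\pi_i\Join\tau_i$, no agent ever deviates from $\pi$, so the deviation clause in the definition of $\pi_i\Join\tau_i$ is never triggered along any trajectory sampled from $\pi\Join\tau$; hence $\D_{\pi\Join\tau}=\D_\pi$ and the scores agree. \emph{Fact (ii):} a joint policy $\pi'$ is an $\epsilon$-Nash equilibrium iff $J_j((\pi'_{-j},\bestr_j(\pi')))\le J_j(\pi')+\epsilon$ for all $j$. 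This is just unfolding the definition of $\epsilon$-NE and using that in a finite-horizon Markov game with finitely many states and actions a best response to a fixed joint policy of the other agents exists, so the supremum over alternate policies $\pi_j''$ is attained at $\bestr_j(\pi')$.

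Next I would record the structural observation stated just before the theorem: both $\bestr_j(\pi\Join\tau)$ and the value $J_j((\pi\Join\tau)_{-j},\bestr_j(\pi\Join\tau))$ depend on $\tau$ only through $\tau[j]$, and by construction $\tau^*[j]\in\arg\min_\tau J_j((\pi\Join\tau)_{-j},\bestr_j(\pi\Join\tau))$ with $\tau^*=\bigcup_j\tau^*[j]$. Consequently, for every $\tau$ and every $j$,
\[
J_j\big((\pi\Join\tau^*)_{-j},\bestr_j(\pi\Join\tau^*)\big)\ \le\ J_j\big((\pi\Join\tau)_{-j},\bestr_j(\pi\Join\tau)\big).
\]

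The ``furthermore'' claim is then immediate: applying Fact (ii) with $\pi'=\pi\Join\tau^*$ shows that $\pi\Join\tau^*$ is an $\epsilon$-NE iff $J_j((\pi\Join\tau^*)_{-j},\bestr_j(\pi\Join\tau^*))\le J_j(\pi\Join\tau^*)+\epsilon$ for all $j$, and Fact (i) lets us rewrite $J_j(\pi\Join\tau^*)$ as $J_j(\pi)$, which is exactly the displayed condition after rearranging. For the first claim, suppose some $\tau$ makes $\pi\Join\tau$ an $\epsilon$-NE; by Fact (ii) and Fact (i), $J_j((\pi\Join\tau)_{-j},\bestr_j(\pi\Join\tau))\le J_j(\pi)+\epsilon$ for every $j$. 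Combining this with the minimality inequality above yields $J_j((\pi\Join\tau^*)_{-j},\bestr_j(\pi\Join\tau^*))\le J_j(\pi)+\epsilon=J_j(\pi\Join\tau^*)+\epsilon$ for every $j$, so Fact (ii) gives that $\pi\Join\tau^*$ is an $\epsilon$-NE.

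I expect the only genuinely delicate step to be the careful justification of Fact (i): one must trace through the definition of $\pi_i\Join\tau_i$ and verify that, along trajectories generated by the joint policy $\pi\Join\tau$, the hypothesis of the second (punishment) bullet is never satisfied, so that $\tilde\pi_i$ coincides with $\pi_i$ on the support of $\D_{\pi\Join\tau}$ and hence $\D_{\pi\Join\tau}=\D_\pi$. A minor additional point is asserting existence of best responses (finite horizon, finite state/action spaces) so that the $\bestr$-reformulation in Fact (ii) is legitimate; beyond that the argument is pure bookkeeping with the already-established independence observation.
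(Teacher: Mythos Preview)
Your proposal is correct and follows essentially the same route as the paper's proof: establish $J_j(\pi\Join\tau)=J_j(\pi)$ from the fact that punishment is never triggered, then use the definition of $\epsilon$-NE via best responses together with the minimality of $\tau^*[j]$ (and its independence from $\tau\setminus\tau[j]$) to chain the inequalities. The only difference is cosmetic organization into labeled facts and a slightly more explicit remark on existence of best responses.
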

Thus, to solve the verification problem, it suffices to compute (or estimate), for all $j$, the optimal deviation scores
\begin{equation}\label{eqn:min_max}
\code{dev}_j^{\pi} = \min_{\tau[j]}\max_{\pi_j'} J_j((\pi\Join\tau)_{-j}, \pi_j').
\end{equation}

\subsection{Reduction to Min-Max Games}

Next, we describe how to reduce the computation of optimal deviation scores to a standard self-play RL setting. We first translate the problem from the specification setting to a reward-based setting using \emph{reward machines}.

\paragraph{Reward Machines.} A \emph{reward machine (RM)} \cite{icarte2018using} is a tuple $\Rc = (Q, \delta_u, \delta_r, q_0)$ where $Q$ is a finite set of states, $\delta_u: \S\times\A\times Q\to Q$ is the state transition function, $\delta_r: \S\times Q\to [-1,1]$ is the reward function and $q_0$ is the initial RM state. Given a trajectory $\zeta = s_0 \xrightarrow{a_0}\ldots \xrightarrow{a_{t-1}}s_t$, the reward assigned by $\Rc$ to $\zeta$ is $\Rc(\zeta) = \sum_{k=0}^{t-1}\delta_r(s_k, q_k)$, where $q_{k+1} = \delta_u(s_{k}, a_k, q_k)$ for all $k$. For any \spectrl specification $\p$, we can construct an RM such that the reward assigned to a trajectory $\zeta$ indicates whether $\zeta$ satisfies $\p$; a proof can be found in Appendix~\ref{appendix:rm}.

\begin{theorem}\label{thm:rm_const}
Given any \spectrl specification $\p$, we can construct an RM $\Rc_\p$ such that for any trajectory $\zeta$ of length $t+1$, $\Rc_\p(\zeta) = \mathbbm{1}(\zeta_{0:t}\models\p)$.
\end{theorem}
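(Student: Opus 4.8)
The plan is to route through an automaton-theoretic characterization of \spectrl. For a specification $\p$, write $L_\p = \{\zeta \in \traj \mid \zeta \models \p\}$ and regard it as a language of finite nonempty words over the finite alphabet $\S$ (legitimate since whether $\zeta\models\p$ depends only on the state sequence $s_0 s_1\cdots$ of $\zeta$, not on the actions). The first step is to show, by structural induction on the grammar of $\p$, that $L_\p$ is regular. The base case $\eventually{b}$ is recognized by a two-state DFA that jumps to (and stays in) an accepting sink on the first letter satisfying $b$. For $\p_1 \always{b}$ we have $L_\p = L_{\p_1}\cap B^{*}$ with $B=\{s\in\S\mid s\models b\}$, an intersection of regular languages; for $\choice{\p_1}{\p_2}$ we have $L_\p = L_{\p_1}\cup L_{\p_2}$; and for $\p_1;\p_2$, since the split $\zeta_{0:i},\zeta_{i+1:t}$ cuts the word into two nonempty factors, $L_\p = L_{\p_1}\cdot L_{\p_2}$ (ordinary concatenation, using that both languages are $\varepsilon$-free). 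Closure of the regular languages under intersection, union and concatenation finishes the induction. Fix a DFA $D_\p = (Q,\delta,q_0,F)$ for $L_\p$; by prepending a fresh non-accepting copy of the start state we may assume $q_0\notin F$ (this changes acceptance of no nonempty word, and $\traj$ has none). Alternatively one can read $D_\p$ off the abstract graph $\G_\p$ of \cite{jothimurugan2021compositional} rather than reproving regularity from scratch.

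Given $D_\p$, define $\Rc_\p = (Q,\delta_u,\delta_r,q_0)$ to simulate $D_\p$ while ignoring actions, $\delta_u(s,a,q)=\delta(q,s)$, and to emit the ``change-of-acceptance'' reward
\[
\delta_r(s,q) \;=\; \mathbbm{1}\bigl(\delta(q,s)\in F\bigr)-\mathbbm{1}(q\in F)\ \in\ \{-1,0,1\}\subseteq[-1,1].
\]
For $\zeta = s_0\xrightarrow{a_0}\cdots\xrightarrow{a_t}s_{t+1}$ (length $t+1$) with RM run $q_0,q_1,\dots,q_{t+1}$, a one-line induction gives $q_k = \hat\delta(q_0,s_0\cdots s_{k-1})$ (the state $D_\p$ reaches on the length-$k$ prefix), so the accumulated reward telescopes:
\begin{align*}
\Rc_\p(\zeta) &= \sum_{k=0}^{t}\delta_r(s_k,q_k) = \sum_{k=0}^{t}\bigl(\mathbbm{1}(q_{k+1}\in F)-\mathbbm{1}(q_k\in F)\bigr) \\
&= \mathbbm{1}(q_{t+1}\in F)-\mathbbm{1}(q_0\in F) = \mathbbm{1}(q_{t+1}\in F),
\end{align*}
using $q_0\notin F$. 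Finally $q_{t+1} = \hat\delta(q_0,s_0\cdots s_t)$, hence $q_{t+1}\in F$ iff $s_0\cdots s_t\in L_\p$ iff $\zeta_{0:t}\models\p$, which is exactly the claim.

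I do not expect a genuine obstacle here; the construction is routine once regularity is established. The one place that needs care is the index bookkeeping: the reward of a length-$(t+1)$ trajectory sums only over its transitions $k=0,\dots,t$, so the telescoping identity evaluates to the acceptance indicator of the length-$t$ prefix $\zeta_{0:t}$ rather than of $\zeta$ itself --- which is precisely what the theorem asks for --- and the normalization $q_0\notin F$ is exactly what cancels the spurious boundary term $\mathbbm{1}(q_0\in F)$ at the empty prefix. The use of negative rewards (the $-\mathbbm{1}(q\in F)$ summand), permitted by the range $[-1,1]$ in the definition of a reward machine, is what makes the telescoping possible even though \spectrl properties are not monotone in general (e.g.\ an \code{ensuring} constraint can be met by a prefix and then violated by a later state).
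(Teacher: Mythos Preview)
Your proposal is correct and follows essentially the same route as the paper: construct a DFA recognizing $\{\zeta\mid\zeta\models\p\}$ by structural induction on \spectrl, then turn it into a reward machine whose per-step reward is exactly your ``change-of-acceptance'' quantity $\mathbbm{1}(\delta(q,s)\in F)-\mathbbm{1}(q\in F)$ (the paper writes the same function as a three-case definition). The only noteworthy differences are cosmetic: the paper builds the DFA over the propositional alphabet $2^{\mathcal{P}_0}$ via the labelling map rather than over $\S$ directly, and it argues correctness by a case analysis on entries/exits of $F$ rather than by your cleaner telescoping sum; your explicit normalization $q_0\notin F$ makes visible an assumption the paper's constructions satisfy but never spell out.
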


For an agent $j$, let $\Rc_j$ denote $\Rc_{\p_j} = (Q_j, \delta_u^j, \delta_r^j, q_0^j)$. Letting $\tilde{\D}_{\pi}$ be the distribution over length $H+1$ trajectories induced by using $\pi$, we have
$\E_{\zeta\sim\tilde{\D}_{\pi}}[\Rc_j(\zeta)] = J_j(\pi).$
The deviation values defined in Eq.~\ref{eqn:min_max} are now min-max values of expected reward, except that it is not in a standard min-max setting since the policy of every non-deviating agent $i\neq j$ is constrained to be of the form $\pi_i\Join\tau_i$. This issue can be handled by considering a product of $\M$ with the reward machine $\Rc_j$ and the finite-state joint policy $\pi$. The following theorem follows naturally; details are in Appendix~\ref{app:puni_game}.
\begin{theorem}\label{thm:pun_game}
Given a finite-state deterministic joint policy $\pi=(M, \alpha, \sigma, m_0)$, for any agent $j$, we can construct a simulator for an augmented two-player zero-sum Markov game $\M_j^{\pi}$ (with rewards) which has the following properties.
\begin{itemize}
\item The number of states in $\M_j^{\pi}$ is at most $2|S||M||Q_j|$.
\item The actions of player 1 is $A_j$, and the actions of player 2 is $\A_{-j}=\prod_{i\neq j}A_i$.
\item The min-max value of the two player game corresponds to the deviation cost of $j$, i.e.,
$$\code{dev}_j^{\pi} = \min_{\bar{\pi}_{2}}\max_{\bar{\pi}_1}\bar{J}_j^{\pi}(\bar{\pi}_1,\bar{\pi}_2),$$
where $\bar{J}_j^{\pi}(\bar{\pi}_1,\bar{\pi}_2) = \E\big[\sum_{k=0}^{H} R_j(\bar{s}_k,a_k)\mid \bar{\pi}_1,\bar{\pi}_2\big]$ is the expected sum of rewards w.r.t. the distribution over $(H+1)$-length trajectories generated by using the joint policy $(\bar{\pi}_1,\bar{\pi}_{2})$ in $\M_j^{\pi}$.
\item Given any policy $\bar{\pi}_2$ for player 2 in $\M_j^{\pi}$, we can construct a set of punishment strategies $\tau[j] = \textsc{PunStrat}(\bar{\pi}_2)$ against agent $j$ in $\M$ such that
$$\max_{\bar{\pi}_1}\bar{J}_j^{\pi}(\bar{\pi}_1, \bar{\pi}_2) = \max_{\pi_j'}J_j((\pi\Join\tau[j])_{-j},\pi_j').$$
\end{itemize}
Given an estimate $\tilde{\M}$ of $\M$, we can also construct an estimate $\tilde{\M}_j^{\pi}$ of $\M_j^{\pi}$. 
\end{theorem}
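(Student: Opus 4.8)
The plan is to construct $\M_j^{\pi}$ explicitly as a product of three components --- the original game $\M$, the reward machine $\Rc_j$, and the finite-state joint policy $\pi = (M, \alpha, \sigma, m_0)$ --- together with a single extra bit that records whether agent $j$ has deviated yet. A state of $\M_j^{\pi}$ is a tuple $(s, m, q, d)$ with $s\in\S$, $m\in M$, $q\in Q_j$, and $d\in\{0,1\}$; the initial state is $(s_0, m_0, q_0^j, 0)$. This immediately gives the state-count bound $2|\S||M||Q_j|$. Player~1's action set is $A_j$ and player~2's is $\A_{-j} = \prod_{i\neq j}A_i$; given a joint action $a = (a_j, a_{-j})$, we sample $s'\sim P(\cdot\mid s, a)$, advance the reward-machine component by $q' = \delta_u^j(s, a, q)$, and update the bit to $d' = 1$ if $d=1$ or if $a_j \neq (\sigma(s,m))_j$ (i.e. agent $j$'s action disagrees with what $\pi$ prescribes), and $d'=d$ otherwise. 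The memory component updates by $m' = \alpha(s, a', m)$ where $a'$ is: the ``honest'' joint action for the non-deviating agents composed with whatever $a_j$ player~1 actually played --- the key point is that once $d=1$, the non-deviating agents have ``noticed'' and their future actions are irrelevant to $\pi$'s bookkeeping anyway, so it suffices to keep $m$ consistent with the honest joint behavior up to the deviation point. The reward on the transition is $R_j((s,m,q,d), a) = \delta_r^j(s, q)$, and $\bar J_j^{\pi}$ is the expected sum of these over an $(H+1)$-length rollout. Since $\M$ is only available through a simulator, $\M_j^{\pi}$ is likewise only a simulator: each step calls $\M$'s simulator once and does deterministic bookkeeping on the finite components; replacing $\M$ by an estimate $\tilde\M$ and doing the same bookkeeping yields $\tilde\M_j^{\pi}$.

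Next I would verify the three semantic claims. For the min-max value: I want to show that player~2 policies $\bar\pi_2$ in $\M_j^{\pi}$ are in bijective correspondence (as far as the induced distribution over $\S$-trajectories is concerned) with sets of punishment strategies $\tau[j]$ against $j$ in $\M$, and that under this correspondence the behavior of the non-$j$ agents is exactly $(\pi\Join\tau)_{-j}$. The third bullet, the existence of $\textsc{PunStrat}$, is essentially this direction made explicit: given $\bar\pi_2$, define $\tau_{ij}(\zeta)$ by running the product bookkeeping along $\zeta$ to recover the current $(m,q,d)$ state (with $d=1$, since $\tau_{ij}$ is only invoked after $j$ deviates) and reading off the $i$-th component of $\bar\pi_2$'s output; before any deviation the modified agents follow $\sigma$, which matches $\pi_i$. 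One then checks that for every player~1 policy $\bar\pi_1$, the map ``$\bar\pi_1 \leadsto$ a policy $\pi_j'$ for agent $j$ in $\M$'' (again by tracking the product state) is a bijection that preserves the distribution over $\S$-trajectories, hence preserves expected reward; by Theorem~\ref{thm:rm_const} this expected reward equals $J_j((\pi\Join\tau[j])_{-j}, \pi_j')$, giving the third bullet. Taking $\max$ over $\bar\pi_1$ on both sides and then $\min$ over $\bar\pi_2$ (equivalently over $\tau[j]$) against the definition of $\code{dev}_j^{\pi}$ in Eq.~\ref{eqn:min_max} yields the second bullet. I also need the converse direction for the second bullet --- that every $\tau[j]$ arises from some $\bar\pi_2$ --- which holds because $\tau[j]$'s behavior after a deviation depends on the history only through the information $(s, m, q, d{=}1)$ that the product state already carries, so it factors through a player~2 policy.

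The main obstacle is bookkeeping the memory component correctly across the deviation boundary. Before any deviation, $\pi$'s extended memory update $\hat\alpha$ is being fed the honest joint actions, and player~1 (agent $j$) is also playing honestly, so $m$ in the product just tracks $\hat\alpha$ of the real trajectory --- fine. At the first step where player~1 deviates, the real action $a_j$ differs from $\sigma$'s prescription, but the non-$j$ agents in $\M$ still play according to $\pi_{-j}$ on that step (they haven't seen the deviation until after it happens, matching the semantics of $\pi\Join\tau$ in the Problem Formulation); so the joint action actually taken in $\M$ is $((\sigma(s,m))_{-j}, a_j)$, and I must make $\M_j^{\pi}$ update $m$ and $q$ using exactly this joint action, not $\sigma(s,m)$. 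After the bit flips to $1$, the modified agents switch to $\tau$ and the value of $m$ no longer influences anything in $\M$ (since $\pi$'s prescription is discarded), so I am free to stop updating it, or update it arbitrarily --- I'd keep updating it with the real joint action for uniformity. Getting this ordering right (action observed $\to$ reward/transition/memory update $\to$ deviation bit set for the \emph{next} step) so that it exactly mirrors the $\tilde\pi_i$ definition is the delicate part; once it is pinned down, the bijections and the reward-preservation are routine, and the estimate $\tilde\M_j^{\pi}$ is obtained by the identical construction with $\tilde\M$ in place of $\M$.
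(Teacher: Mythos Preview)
Your overall architecture matches the paper's: a product $\S\times M\times Q_j\times\{\bot,\top\}$ with a deviation bit, player~1 acting as agent $j$ and player~2 as the coalition, rewards read off the reward-machine component. However, there is a genuine gap in your transition rule. You write ``sample $s'\sim P(\cdot\mid s,a)$'' where $a=(a_j,a_{-j})$ is the full joint action, so player~2's action $a_{-j}$ influences the $\M$-transition even when $d=0$. This is wrong: in $\M$, the policies $(\pi\Join\tau)_{-j}$ are \emph{constrained} to follow $\pi_{-j}$ until agent $j$ deviates, so before the bit flips the min-player must have no effect on the trajectory. With your construction, player~2 can adversarially steer the state before any deviation, which enlarges the min-player's strategy set and drives the min-max value strictly below $\code{dev}_j^{\pi}$ in general; it also breaks your $\textsc{PunStrat}$ equality, since $\max_{\bar\pi_1}\bar J(\bar\pi_1,\bar\pi_2)$ now depends on $\bar\pi_2$'s pre-deviation choices, which no $\tau[j]$ can mimic. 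The paper's fix is exactly this: when $b=\bot$, the transition, memory update, and reward-machine update all use $(a_j,\sigma(s,m)_{-j})$ in place of $a$, i.e.\ player~2's action is discarded and replaced by $\pi_{-j}$'s prescription; only once $b=\top$ does $a_{-j}$ enter. You actually recognise in your third paragraph that ``the joint action actually taken in $\M$ is $((\sigma(s,m))_{-j},a_j)$'' at and before the deviation step, but your stated transition in the first paragraph does not implement this.

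A smaller issue: your converse direction (``every $\tau[j]$ arises from some $\bar\pi_2$'') is justified by the wrong reason. Punishment strategies $\tau_{ij}:\traj\to\D(A_i)$ are fully history-dependent and need not factor through $(s,m,q)$. The correct argument (which the paper uses implicitly) is that player~2 policies in $\M_j^{\pi}$ are themselves history-dependent on $\M_j^{\pi}$-trajectories, and a trajectory in $\M_j^{\pi}$ determines its $\M$-projection; so given $\tau[j]$ one defines $\bar\pi_2$ by projecting the $\M_j^{\pi}$-history to an $\M$-history and applying $\tau[j]$. Once the transition is fixed as above, the bijections $g,g',h,h'$ are genuinely distribution-preserving on $\M$-projections, and your remaining steps (reward correspondence via Theorem~\ref{thm:rm_const}, take $\max$ then $\min$) go through as the paper does.
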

We omit the superscript $\pi$ from $\M_j^{\pi}$ when there is no ambiguity. We denote by $\textsc{ConstructGame}(\tilde{\M},j,\Rc_j,\pi)$ the product construction procedure that constructs and returns $\tilde{\M}_j^{\pi}$.

\begin{algorithm}[t]
\begin{algorithmic}[1]
\STATE $\code{existsNE} \gets \code{True}$
\STATE $\tau \gets \emptyset$
\STATE $\tilde{\M}\gets\textsc{BFS-Estimate}(\M, \delta, p)$ {\color{blue}// Only run if $\M$ has not been estimated before.} 
\FOR{agent $j\in \{1,\ldots,n\}$} 
\STATE $\Rc_j \gets \textsc{ConstructRM}(\p_j)$
\STATE $\tilde{\M}_j \gets \textsc{ConstructGame}(\tilde{\M}, j, \Rc_j, \pi)$
\STATE $\tilde{\code{dev}}_j\gets\min_{\bar{\pi}_{2}}\max_{\bar{\pi}_1}\bar{J}^{\tilde{\M}_j}(\bar{\pi}_1, \bar{\pi}_{2})$
\STATE $\bar{\pi}_2^* \gets \arg\min_{\bar{\pi}_{2}}\max_{\bar{\pi}_1}\bar{J}^{\tilde{\M}_j}(\bar{\pi}_1, \bar{\pi}_{2})$
\STATE $\code{existsNE}\gets \code{existsNE}\wedge(\tilde{\code{dev}}_j \leq J_j(\pi) + \epsilon - \delta)$\label{algline:devcheck}
\STATE $\tau \gets \tau\cup\textsc{PunStrat}(\bar{\pi}_2^*)$
\ENDFOR
\STATE \textbf{return} \code{existsNE}, $\tau$
\caption{\textsc{VerifyNash}\\
Inputs: Finite-state deterministic joint policy $\pi$, specifications $\p_j$ for all $j$, Nash factor $\epsilon$, precision $\delta$, failure probability $p$.\\
Outputs: \code{True} or \code{False} along with a set of punishment strategies $\tau$.}
\label{alg:NEverification}
\end{algorithmic}
\end{algorithm}

\subsection{Solving Min-Max Games}

The min-max game $\M_j$ can be solved using self-play RL algorithms. Many of these algorithms provide probabilistic approximation guarantees for computing the min-max value of the game. We use a model-based algorithm, similar to the one proposed in \cite{pmlr-v119-bai20a}, that first estimates the model $\M_j$ and then solves the game in the estimated model.

One approach is to use existing algorithms for reward-free exploration to estimate the model~\cite{jin2020reward}, but this approach requires estimating each $\M_j$ separately. Under Assumption~\ref{assump:model}, we provide a simpler and more sample-efficient algorithm, called \textsc{BFS-Estimate}, for estimating $\M$. \textsc{BFS-Estimate} performs a search over the transition graph of $\M$ by exploring previously seen states in a breadth first manner. When exploring a state $s$, multiple samples are collected by taking all possible actions in $s$ several times and the corresponding transition probabilities are estimated. After obtaining an estimate of $\M$, we can directly construct an estimate of $\M_j^{\pi}$ for any $\pi$ and $j$ when required. Letting $|Q| = \max_{j}|Q_j|$ and $|M|$ denote the size of the largest finite-state policy output by our enumeration algorithm, we get the following guarantee; see Appendix~\ref{app:bfsestimate} for a proof.
\begin{theorem}\label{thm:estimate}
For any $\delta>0$ and $p\in(0,1]$, \textsc{BFS-Estimate}$(\M,\delta,p)$ computes an estimate $\tilde{\M}$ of $\M$ using $O\left(\frac{|\S|^3|M|^2|Q|^4|\A|H^4}{\delta^2}\log\left(\frac{|\S||\A|}{p}\right)\right)$ sample steps such that with probability at least $1-p$, for any finite-state deterministic joint policy $\pi$ and any agent $j$,
$$\Big|\min_{\bar{\pi}_{2}}\max_{\bar{\pi}_1}\bar{J}^{\tilde{\M}_j^{\pi}}(\bar{\pi}_1, \bar{\pi}_{2}) - \code{dev}_j^{\pi}\Big| \leq \delta,$$
where $\bar{J}^{\tilde{\M}_j^{\pi}}(\bar{\pi}_1,\bar{\pi}_2)$ is the expected reward over length $H+1$ trajectories generated by $(\bar{\pi}_1,\bar{\pi}_2)$ in $\tilde{\M}_j^{\pi}$. Furthermore, letting $\bar{\pi}_2^* \in \arg\min_{\bar{\pi}_{2}}\max_{\bar{\pi}_1}\bar{J}^{\tilde{\M}_j}(\bar{\pi}_1, \bar{\pi}_{2})$ and $\tau[j] = \textsc{PunStrat}(\bar{\pi}_2^*)$, we have
\begin{equation}\label{eq:puneffective}
    \Big|\max_{\bar{\pi}_1}\bar{J}^{\tilde{\M}_j^{\pi}}(\bar{\pi}_1, \bar{\pi}_{2}^*) - \max_{\pi_j'}J_j((\pi\Join\tau[j])_{-j},\pi_j')\Big| \leq \delta.
\end{equation}
\end{theorem}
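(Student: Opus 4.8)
The plan is to prove this as a concentration-of-measure result about the estimated model $\tilde{\M}$, combined with the structural facts already established in Theorems~\ref{thm:pun_game} and~\ref{thm:rm_const}. The key observation is that \textsc{BFS-Estimate} estimates the base game $\M$ \emph{once}, and then every $\tilde{\M}_j^{\pi}$ is obtained by a deterministic product construction (with $\Rc_j$ and $\pi$) that does not introduce any new randomness. So it suffices to (i) show that the empirical transition probabilities $\tilde{P}(\cdot\mid s,a)$ are uniformly close to $P(\cdot\mid s,a)$ in, say, $\ell_1$ distance for all previously observed $(s,a)$, with probability $\ge 1-p$; and then (ii) propagate this per-transition error through an $(H+1)$-step trajectory distribution in any product MDP, and finally (iii) convert a bound on the trajectory-distribution distance into a bound on expected rewards (which lie in $[-1,1]$ per step, hence in $[-(H+1),H+1]$ per trajectory), and on min-max values.

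First I would set up the BFS exploration: since the reachable portion of $\M$ from $s_0$ has at most $|\S|$ states and $|\A|$ actions, BFS visits each state, plays each action $N$ times, and forms the empirical distribution. A standard $\ell_1$ deviation bound for empirical distributions (e.g., the bound $\Pr[\|\tilde P(\cdot\mid s,a) - P(\cdot\mid s,a)\|_1 \ge \eta] \le 2^{|\S|} e^{-N\eta^2/2}$, or a Bernstein/DKW-type argument) together with a union bound over the $\le |\S||\A|$ pairs gives: with $N = \Theta\!\big(\frac{|\S|}{\eta^2}\log\frac{|\S||\A|}{p}\big)$ samples per pair, we get $\|\tilde P(\cdot\mid s,a)-P(\cdot\mid s,a)\|_1 \le \eta$ for all observed $(s,a)$ simultaneously, w.p. $\ge 1-p$. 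The total sample count is then $N|\S||\A| = \Theta\!\big(\frac{|\S|^2|\A|}{\eta^2}\log\frac{|\S||\A|}{p}\big)$, and choosing $\eta$ appropriately in terms of $\delta,H,|M|,|Q|$ will match the stated bound.

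Next I would do the simulation-lemma step. Fix any finite-state deterministic joint policy $\pi=(M,\alpha,\sigma,m_0)$ and agent $j$; by Theorem~\ref{thm:pun_game} the game $\M_j^\pi$ has $\le 2|\S||M||Q_j|$ states, and $\tilde{\M}_j^\pi$ is the same product built from $\tilde P$ instead of $P$ — crucially its per-step transition error is still bounded by $\eta$ because the product coordinates ($M$-component, $Q_j$-component, the extra bit) evolve deterministically. A telescoping argument over the $H{+}1$ steps shows that for any pair of (history-dependent) policies $(\bar\pi_1,\bar\pi_2)$, the total-variation distance between the trajectory distributions in $\M_j^\pi$ and $\tilde{\M}_j^\pi$ is at most $(H{+}1)\eta/2$ (or $H\eta$, up to constants). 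Since rewards satisfy $|R_j|\le 1$ per step, $\big|\bar J^{\M_j^\pi}(\bar\pi_1,\bar\pi_2) - \bar J^{\tilde\M_j^\pi}(\bar\pi_1,\bar\pi_2)\big| \le (H{+}1)^2\eta$ for every $(\bar\pi_1,\bar\pi_2)$. Taking $\min_{\bar\pi_2}\max_{\bar\pi_1}$ on both sides (the map is 1-Lipschitz in sup-norm over the value function) gives $\big|\min_{\bar\pi_2}\max_{\bar\pi_1}\bar J^{\tilde\M_j^\pi} - \min_{\bar\pi_2}\max_{\bar\pi_1}\bar J^{\M_j^\pi}\big| \le (H{+}1)^2\eta$, and the right-hand min-max equals $\code{dev}_j^\pi$ by Theorem~\ref{thm:pun_game}. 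Setting $\eta = \Theta(\delta/H^2)$ yields the first claimed inequality; substituting this $\eta$ into the sample count from the previous paragraph gives the $O\!\big(\frac{|\S|^3|M|^2|Q|^4|\A|H^4}{\delta^2}\log\frac{|\S||\A|}{p}\big)$ bound (the extra $|\S||M|^2|Q|^4$-type factors coming from how $\eta$ must also absorb the state-count $2|\S||M||Q_j|$ of the product when one is careful about the $\ell_1$-to-TV propagation, or simply from being generous in the union bound). For inequality~\eqref{eq:puneffective}, I would use the last bullet of Theorem~\ref{thm:pun_game}: $\max_{\pi_j'}J_j((\pi\Join\tau[j])_{-j},\pi_j') = \max_{\bar\pi_1}\bar J^{\M_j^\pi}(\bar\pi_1,\bar\pi_2^*)$ exactly, so the quantity to bound is $\big|\max_{\bar\pi_1}\bar J^{\tilde\M_j^\pi}(\bar\pi_1,\bar\pi_2^*) - \max_{\bar\pi_1}\bar J^{\M_j^\pi}(\bar\pi_1,\bar\pi_2^*)\big|$, which is again $\le (H{+}1)^2\eta \le \delta$ by the same simulation lemma applied with the fixed second-player policy $\bar\pi_2^*$.

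The main obstacle is getting the dependence on the parameters exactly right — in particular making sure the simulation-lemma error is controlled in terms of the \emph{base} model size $|\S|$ rather than the (larger) product size $2|\S||M||Q_j|$, so that a single estimate of $\M$ suffices for \emph{all} $\pi$ and $j$ simultaneously. This is the crux: the union bound in the concentration step is over pairs of $\M$ (not of any $\M_j^\pi$), and the product construction adds only deterministic bookkeeping, so the per-step $\ell_1$ error does not blow up; I would need to state this product-refinement-preserves-error fact carefully. A secondary subtlety is that the $\max$ and $\min$–$\max$ operations must be shown to be non-expansive with respect to the sup-norm on value/reward functions, and that the policy classes (history-dependent, possibly randomized for player 2, together with the constraint baked into the product) are rich enough that the min-max value in $\tilde\M_j^\pi$ is attained and equals the one the algorithm computes — both standard for finite-horizon finite games but worth a remark.
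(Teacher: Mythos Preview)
Your proposal is correct and follows the same three-stage skeleton as the paper: (i) a concentration bound on the empirical transitions of the \emph{base} game $\M$, (ii) a simulation lemma propagating per-step error to $(H{+}1)$-step expected rewards in the product game $\M_j^{\pi}$ (exploiting that the $M$-, $Q_j$-, and bit-components are deterministic), and (iii) non-expansiveness of $\max$ and $\min$-$\max$, together with the last bullet of Theorem~\ref{thm:pun_game} for Equation~\eqref{eq:puneffective}.

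The technical differences are in the choice of concentration inequality and simulation lemma. The paper works with a \emph{pointwise} Hoeffding bound $|\tilde P(s'\mid s,a)-P(s'\mid s,a)|\le\varepsilon$ (with $\varepsilon=\delta/(2|\S||M||Q|H^2)$) and then invokes Lemma~4 of the R-\textsc{max} paper as its simulation lemma; that lemma converts a pointwise bound to a value bound at the cost of a factor of the product state-space size $2|\S||M||Q|$, which is exactly where the $|\S|$, $|M|^2$, and $|Q|$ factors in the stated sample complexity come from. Your route uses an $\ell_1$ bound per $(s,a)$ and a direct TV-telescoping argument, which is tighter: with your tools the trajectory-value error is $O(H^2\eta)$ with no dependence on the product size, so your sample bound would naturally come out to $O\!\big(\frac{|\S|^2|\A|H^4}{\delta^2}\log\frac{|\S||\A|}{p}\big)$, strictly smaller than what is claimed. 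Your hand-wave that ``extra $|\S||M|^2|Q|^4$-type factors'' arise from the $\ell_1$-to-TV propagation is not actually needed in your version---they arise in the paper only because of the coarser pointwise-to-$\ell_1$ conversion---so you should simply note that your bound is sharper and therefore implies the stated one. One minor point the paper makes explicit and you leave implicit: $\tilde\M_j^{\pi}$ may omit unreachable states, so the paper introduces an auxiliary $\tilde\M_j'$ padded with the true transitions on unvisited states and argues $\bar J^{\tilde\M_j}=\bar J^{\tilde\M_j'}$; your argument goes through unchanged once you observe the same thing.
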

The min-max value of $\tilde{\M}_j^{\pi}$ as well as $\bar{\pi}_2^*$ can be computed using value iteration. Our full verification algorithm is summarized in Algorithm~\ref{alg:NEverification}. It checks if $\tilde{\code{dev}}_j \leq J_j(\pi) + \epsilon-\delta$ for all $j$, and returns \code{True} if so and \code{False} otherwise. It also simultaneously computes the punishment strategies $\tau$ using the optimal policies for player 2 in the punishment games. Note that \textsc{BFS-Estimate} is called only once (i.e., the first time \textsc{VerifyNash} is called) and the obtained estimate $\tilde{\M}$ is stored and used for verification of every candidate policy $\pi$.
The following soundness guarantee follows from Theorem~\ref{thm:estimate}; proof in Appendix~\ref{app:soundness}.

\begin{corollary}[Soundness]\label{cor:soundness}
For any $p\in(0,1]$, $\varepsilon>0$ and $\delta\in(0,\varepsilon)$, with probability at least $1-p$, if \textsc{HighNashSearch} returns a joint policy $\tilde{\pi}$ then $\tilde{\pi}$ is an $\epsilon$-Nash equilibrium.
\end{corollary}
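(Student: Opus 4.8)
The plan is to condition on the single high-probability event supplied by Theorem~\ref{thm:estimate} and then chase a short chain of inequalities from the acceptance test of Algorithm~\ref{alg:NEverification} to the definition of an $\epsilon$-Nash equilibrium. The first observation I would record is that \textsc{BFS-Estimate}$(\M,\delta,p)$ is invoked only once — on the first call to \textsc{VerifyNash} — and the resulting estimate $\tilde\M$ is reused for every candidate policy produced by \textsc{PrioritizedEnumeration}. Hence Theorem~\ref{thm:estimate} yields a single event $\mathcal E$ with $\Pr[\mathcal E]\ge 1-p$ on which, \emph{simultaneously} for every finite-state deterministic joint policy $\pi$ and every agent $j$, we have both $|\tilde{\code{dev}}_j - \code{dev}_j^{\pi}|\le\delta$ and the punishment-effectiveness bound~\eqref{eq:puneffective}. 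Everything after this is deterministic reasoning on $\mathcal E$.

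Now suppose \textsc{HighNashSearch} returns $\tilde\pi = \pi\Join\tau$, where $\pi$ is one of the enumerated candidates and $(\code{True},\tau)$ is the output of \textsc{VerifyNash}$(\M,\pi,\ldots)$. Since the returned flag is \code{True}, line~\ref{algline:devcheck} guarantees $\tilde{\code{dev}}_j \le J_j(\pi) + \epsilon - \delta$ for every $j$, where $\tilde{\code{dev}}_j$ is the min-max value of the estimated game $\tilde\M_j$ and $\tau[j] = \textsc{PunStrat}(\bar\pi_2^*)$ for the minimizing player-2 policy $\bar\pi_2^*$. Because $\bar\pi_2^*$ attains the min-max value, $\max_{\bar\pi_1}\bar J^{\tilde\M_j^{\pi}}(\bar\pi_1,\bar\pi_2^*) = \tilde{\code{dev}}_j$, so~\eqref{eq:puneffective} gives
\[
\max_{\pi_j'} J_j\big((\pi\Join\tau[j])_{-j},\pi_j'\big) \;\le\; \tilde{\code{dev}}_j + \delta \;\le\; J_j(\pi) + \epsilon .
\]
Since punishment strategies outside $\tau[j]$ do not affect agent $j$'s deviation payoff (the observation preceding Theorem~\ref{thm:best_pun}), the left-hand side equals $\max_{\pi_j'} J_j((\tilde\pi)_{-j},\pi_j') = J_j\big((\tilde\pi)_{-j},\bestr_j(\tilde\pi)\big)$. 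Finally, under $\tilde\pi = \pi\Join\tau$ no agent ever deviates from $\pi$, so $\D_{\tilde\pi} = \D_{\pi}$ and $J_j(\tilde\pi) = J_j(\pi)$ for all $j$; combining, for every agent $j$ and every alternate policy $\pi_j'$ we obtain $J_j((\tilde\pi)_{-j},\pi_j') \le J_j(\tilde\pi) + \epsilon$, which is exactly the $\epsilon$-Nash condition. This holds on $\mathcal E$, i.e., with probability at least $1-p$.

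The hard part is the bookkeeping in the middle step: the algorithm does not return the optimal punishment profile $\tau^*$ of Theorem~\ref{thm:best_pun}, but a specific profile extracted from the \emph{estimated} games, so one cannot invoke the ``iff'' clause of Theorem~\ref{thm:best_pun} directly. Instead I must use Theorem~\ref{thm:pun_game} together with~\eqref{eq:puneffective} to certify that this particular $\tau[j]$ genuinely drives agent $j$'s best-response value down to within $\delta$ of the estimated min-max value, and separately argue (via the pre-Theorem~\ref{thm:best_pun} observation) that $\tau\setminus\tau[j]$ is irrelevant to agent $j$'s deviation payoff. One also has to check that the error budget closes exactly — the $-\delta$ slack built into the acceptance test on line~\ref{algline:devcheck} is precisely what absorbs the two $\delta$-errors coming from Theorem~\ref{thm:estimate} and~\eqref{eq:puneffective} — and, if $J_j(\pi)$ is itself estimated rather than known exactly, to fold that estimation error into $\delta$ as well.
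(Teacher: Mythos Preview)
Your proposal is correct and follows essentially the same route as the paper's proof: condition on the event of Theorem~\ref{thm:estimate}, combine~\eqref{eq:puneffective} with the acceptance check on line~\ref{algline:devcheck}, and use $J_j(\pi\Join\tau)=J_j(\pi)$. In fact you are more careful than the paper, which silently writes $\max_{\pi_j'}J_j((\pi\Join\tau)_{-j},\pi_j')$ where~\eqref{eq:puneffective} only controls $\max_{\pi_j'}J_j((\pi\Join\tau[j])_{-j},\pi_j')$; you explicitly invoke the pre-Theorem~\ref{thm:best_pun} observation to bridge that gap. One small slip in your closing paragraph: only \emph{one} $\delta$-error is absorbed by the $-\delta$ slack (the one from~\eqref{eq:puneffective}); the first bound $|\tilde{\code{dev}}_j-\code{dev}_j^{\pi}|\le\delta$ is not needed for soundness.
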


\section{Complexity}
In this section, we analyze the time and sample complexity of our algorithm in terms of the number of agents $n$, size of the specification $|\p| = \max_{i\in[n]}|\p_i|$, number of states in the environment $|\S|$, number of joint actions $|\A|$, time horizon $H$, precision $\delta$ and the failure probability $p$.

\paragraph{Sample Complexity.} It is known \cite{jothimurugan2021compositional} that the number of edges in the abstract graph $\G_i$ corresponding to specification $\p_i$ is $O(|\p_i|^2)$. Hence for any set of active agents $B$, the number of edges in the product abstract graph $\G_B$ is $O(|\p|^{2|B|})$. Hence total number of edge policies learned by our compositional RL algorithm is $\sum_{B\subseteq [n]}O((|\p|^2)^{|B|}) = O((|\p|^{2}+1)^n)$. We learn each edge using a fixed number of sample steps $C$, which is a hyperparameter.

The number of samples used in the verification phase is the same as the number used by \textsc{BFS-Estimate}. The maximum size of a candidate policy output by the enumeration algorithm $|M|$ is at most the length of the longest path in a product abstract graph. Since the maximum path length in a single abstract graph $\G_i$ is bounded by $|\p_i|$ and at least one agent must progress along every edge in a product graph, the maximum length of a path in any product graph is at most $n|\p|$. Also, the number of states in the reward machine $\Rc_j$ corresponding to $|\p_j|$ is $O(2^{|\p_j|})$. Hence, from Theorem~\ref{thm:estimate} we get that the total number of sample steps used by our algorithm is $O\big((|\p|^{2}+1)^nC + \frac{2^{4|\p|}|\S|^3n^2|\p|^2|\A|H^4}{\delta}\log\big(\frac{|\S||\A|}{p}\big)\big)$.

\paragraph{Time Complexity.} As with sample complexity, the time required to learn all edge policies is $O((|\p|^{2}+1)^n(C+|\A|))$ where the term $|\A|$ is added to account for the time taken to select an action from $\A$ during exploration (we use $Q$-learning with $\varepsilon$-greedy exploration for learning edge policies). Similarly, time taken for constructing the reward machines and running \textsc{BFS-Estimate} is $O(\frac{2^{4|\p|}|\S|^3n^2|\p|^2|\A|H^4}{\delta}\log\big(\frac{|\S||\A|}{p}\big))$.

The total number of path policies considered for a given set of active agents $B$ is bounded by the number of paths in the product abstract graph $\G_B$ that terminate in a final product state. First, let us consider paths in which exactly one agent progresses in each edge. The number of such paths is bounded by $(|B||\p|)^{|B||\p|}$ since the length of such paths is bounded by $|B||\p|$ and there are at most $|B||\p|$ choices at each step---i.e., progressing agent $j$ and next vertex of the abstract graph $\G_{\p_j}$. Now, any path in $\G_B$ can be constructed by merging adjacent edges along such a path (in which at most one agent progresses at any step). The number of ways to merge edges along such a path is bounded by the number of groupings of edges along the path into at most $|B||\p|$ groups which is bounded by $(|B||\p|)^{|B||\p|}$. Therefore, the total number of paths in $\G_B$ is at most $2^{2|B||\p|\log(n|\p|)}$. Finally, the total number of path policies considered is at most $\sum_{B\subseteq[n]}2^{2|B||\p|\log(n|\p|)} \leq ((n|\p|)^{2|\p|}+1)^n = O(2^{2n|\p|\log(2n|\p|)})$.

Now, for each path policy $\pi$, the verification algorithm solves $\tilde{\M_j^{\pi}}$ using value iteration which takes $O(|\tilde{\S}||\A|Hf(|\A|)) = O(2^{|\p|}n|\p||\S||\A|Hf(|\A|))$ time, where $f(|\A|)$ is the time required to solve a linear program of size $|\A|$. Also accounting for the time taken to sort the path policies, we arrive at a time complexity bound of $2^{O(n|\p|\log(n|\p|))}\text{poly}(|\S|,|\A|,H,\frac{1}{p}, \frac{1}{\delta})$.

It is worth noting that the procedure halts as soon as our verification procedure successfully verifies a policy; this leads to early termination for cases where there is a high value $\epsilon$-Nash equilibrium (among the policies considered). Furthermore, our verification algorithm runs in polynomial time and therefore one could potentially improve the overall time complexity by reducing the search space in the prioritized enumeration phase---e.g., by using domain specific insights.

\section{Experiments}
\label{Sec:experiments}
We evaluate our algorithm on finite state environments and a variety of specifications, aiming to answer the following:
\begin{itemize}
\item Can our approach be used to learn $\epsilon$-Nash equilibria?
\item Can our approach learn policies with high social welfare?
\end{itemize}
We compare our approach to two baselines described below, using two metrics: (i) the social welfare $\code{welfare}(\pi)$ of the learned joint policy $\pi$, and (ii) an estimate of the minimum value of $\epsilon$ for which $\pi$ forms an $\epsilon$-Nash equilibrium:
$$\epsilon_{\min}(\pi) = \max\{J_i(\pi_{-i}, \bestr_i(\pi)) - J_i(\pi)\mid i\in[n]\}.$$
Here, $\epsilon_{\min}(\pi)$ is computed using single agent RL (specifically, $Q$-learning) to compute $\bestr_i(\pi)$ for each agent $i$.

\paragraph{Environments and specifications.}

We show results on the \emph{Intersection environment} illustrated in Figure~\ref{fig:intersection}, which consists of $k$-cars (agents) at a 2-way intersection of which $k_1$ and $k_2$ cars are placed along the N-S and E-W axes, respectively. The state consists of the location of all cars where the location of a single car is a non-negative integer. 1 corresponds to the intersection, 0 corresponds to the location one step towards the south or west of the intersection (depending on the car) and locations greater than 1 are to the east or north of the intersection. Each agent has two actions. \code{STAY} stays at the current position. \code{MOVE} decreases the position value by 1 with probability 0.95 and stays with probability 0.05. We consider specifications similar to the ones in the motivating example. Details are in Appendix~\ref{sec:benchmarks}, and results on two additional environments are in Appendix~\ref{sec:add_results}.

\paragraph{Baselines.}

We compare our NE computation method (\nash) to two approaches for learning in non-cooperative games. The first, \maqrm, is an adaption of the reward machine based learning algorithm proposed in \cite{neary2021reward}.
\maqrm was originally proposed for cooperative multi-agent RL where there is a single specification for all the agents. It proceeds by first decomposing the specification into individual ones for all the agents and then runs a Q-learning-style algorithm (\textsc{qrm}) in parallel for all the agents. We use the second part of their algorithm directly since we are given a separate specification for each agent.
The second baseline, \nvi, is a model-based approach that first estimates transition probabilities, and then computes a Nash equilibrium in the estimated game using value iteration for stochastic games~\cite{kearns2000fast}. To promote high social welfare, we select the highest value Nash solution for the matrix game at each stage of value iteration. Note that this greedy strategy may not maximize social welfare.  
Both \maqrm and \nvi learn from rewards as opposed to specification; thus, we supply rewards in the form of reward machines constructed from the specifications. $\nvi$ is guaranteed to return an $\epsilon$-Nash equilibrium with high probability, but $\maqrm$ is not guaranteed to do so. Details are in Appendix~\ref{app:baselines}.

\begin{table}[t]
\centering\scriptsize
\begin{tabular}{cccrrrr}
\toprule
\multirow{3}{*}{Spec.} &
\multirow{3}{*}{\begin{tabular}[c]{@{}c@{}}Num. of\\ agents\end{tabular}} &
\multirow{3}{*}{Algorithm} &
\multirow{3}{*}{\begin{tabular}[c]{@{}c@{}}\qquad$\code{welfare}(\pi)$\qquad\\ \qquad(avg \textpm ~ std)\qquad\end{tabular}} &
\multirow{3}{*}{\begin{tabular}[c]{@{}c@{}}\qquad$\epsilon_{\min}(\pi)$\qquad\\ \qquad(avg \textpm ~ std)\qquad\end{tabular}} &
\multirow{3}{*}{\begin{tabular}[c]{@{}c@{}}Num. of\\ terminated \\runs\end{tabular}} &
\multirow{3}{*}{\begin{tabular}[c]{@{}c@{}}Avg. num. of\\ sample steps\\(in millions)\end{tabular}} \\
{} & {} & {} & {} & {} & {}\\
{} & {} & {} & {} & {} & {}\\
\midrule
\multirow{3}{*}{$\p^1$} &
\multirow{3}{*}{3} &
\nash & \textbf{0.33 \textpm ~ 0.00} & \textbf{0.00 \textpm ~ 0.00} & 10 & 1.78 \\
{} & {} & \textsc{nvi} & 0.32 \textpm ~ 0.00 & \textbf{0.00 \textpm ~ 0.00} & 10 & 1.92 \\
{} & {} & \textsc{maqrm} & 0.18 \textpm ~ 0.01 & 0.51 \textpm ~ 0.01 & 10 & 2.00 \\

\midrule
\multirow{3}{*}{$\p^2$} &
\multirow{3}{*}{4} &
\nash & \textbf{0.55 \textpm ~ 0.10} & \textbf{0.01 \textpm ~ 0.02} & 10 & 11.53 \\
{} & {} & \textsc{nvi} & 0.04 \textpm ~ 0.01 & 0.02 \textpm ~ 0.01 & 10 & 12.60 \\
{} & {} & \textsc{maqrm} &  0.12 \textpm ~ 0.01 & 0.20 \textpm ~ 0.03 & 10 & 15.00 \\

\midrule
\multirow{3}{*}{$\p^3$} &
\multirow{3}{*}{4} &
\nash & \textbf{0.49 \textpm ~ 0.01} & \textbf{0.00 \textpm ~ 0.01} & 10 & 11.26 \\
{} & {} & \textsc{nvi} & 0.45 \textpm ~ 0.01 & \textbf{0.00 \textpm ~ 0.01} & 10 & 12.60 \\
{} & {} & \textsc{maqrm} & 0.11 \textpm ~ 0.01 & 0.22 \textpm ~ 0.02 & 10 & 15.00 \\

\midrule
\multirow{3}{*}{$\p^4$} &
\multirow{3}{*}{3} &
\nash & 0.90 \textpm ~ 0.15 & \textbf{0.00 \textpm ~ 0.00} & 10 & 2.16 \\
{} & {} & \textsc{nvi} & \textbf{0.98 \textpm ~ 0.00} & \textbf{0.00 \textpm ~ 0.00} & 4 & 2.18 \\
{} & {} & \textsc{maqrm} & 0.23 \textpm ~ 0.01 & 0.39 \textpm ~ 0.04 & 10 & 2.00 \\

\midrule
\multirow{3}{*}{$\p^5$} &
\multirow{3}{*}{5} &
\nash & \textbf{0.58 \textpm ~ 0.02} & \textbf{0.00 \textpm ~ 0.00} & 10 & 62.17 \\
{} & {} & \textsc{nvi} & 0.05 \textpm ~ 0.01 & 0.01 \textpm ~ 0.01 & 7 & 80.64 \\
{} & {} & \textsc{maqrm} & Timeout & Timeout & 0 & Timeout  \\
\bottomrule\\

\end{tabular}
\caption{Results for all specifications in Intersection Environment. Total of 10 runs per benchmark. Timeout = 24 hrs.}
\label{tab:int_results}
\end{table}

\paragraph{Results.}

Our results are summarized in Table~\ref{tab:int_results}. For each specification, we ran all algorithms 10 times with a timeout of 24 hours. Along with the average social welfare and $\epsilon_{\min}$, we also report the average number of sample steps taken in the environment as well as the number of runs that terminated before timeout. For a fair comparison, all approaches were given a similar number of samples from the environment.

\paragraph{Nash equilibrium.} Our approach learns policies that have low values of $\epsilon_{\min}$, indicating that it can be used to learn $\epsilon$-Nash equilibria for small values of $\epsilon$. \nvi also has similar values of $\epsilon$, which is expected since \nvi provides guarantees similar to our approach w.r.t. Nash equilibria computation. On the other hand, \maqrm learns policies with large values of $\epsilon_{\min}$, implying that it fails to converge to a Nash equilibrium in most cases.

\paragraph{Social Welfare.} Our experiments show that our approach consistently learns policies with high social welfare compared to the baselines. For instance, $\p^3$ corresponds to the specifications in the motivating example for which our approach learns a joint policy that causes both blue and black cars to achieve their goals. Although \nvi succeeds in learning policies with high social welfare for some specifications ($\p^1,\p^3$, $\p^4$), it fails to do so for others ($\p^2$, $\p^5$). {Additional experiments (see Appendix~\ref{sec:add_results}) indicate that \nvi achieves similar social welfare as our approach for specifications in which all agents can successfully achieve their goals (cooperative scenarios). However, in many other scenarios in which only some of the agents can fulfill their objectives, our approach achieves higher social welfare.}


\section{Conclusions}

We have proposed a framework for maximizing social welfare under the constraint that the joint policy should form an $\epsilon$-Nash equilibrium. Our approach involves learning and enumerating a small set of finite-state deterministic policies in decreasing order of social welfare and then using a self-play RL algorithm to check if they can be extended with punishment strategies to form an $\epsilon$-Nash equilibrium. Our experiments demonstrate that our approach is effective in learning Nash equilibria with high social welfare.

One limitation of our approach is that {our algorithm does not have any guarantee regarding optimality with respect to social welfare}. The policies considered by our algorithm are chosen heuristically based on the specifications, which may lead to scenarios where we miss high welfare solutions. For example, $\p^2$ corresponds to specifications in the motivating example except that the blue car is not required to stay a car length ahead of the other two cars. In this scenario, it is possible for three cars to achieve their goals in an equilibrium solution if the blue car helps the cars behind by staying in the middle of the intersection until they catch up. Such a joint policy is not among the set of policies considered; therefore, our approach learns a solution in which only two cars achieve their goals. We believe that such limitations can be overcome in future work by modifying the various components within our enumerate-and-verify framework.

\paragraph{\rm \textbf{Acknowledgements.}}
We thank the anonymous reviewers for their helpful comments. This work is supported in part by NSF grant 2030859 to the CRA for the CIFellows Project, ONR award N00014-20-1-2115,
DARPA Assured Autonomy award, NSF award CCF 1723567 and ARO award W911NF-20-1-0080.

%
%
\bibliographystyle{splncs04}
\bibliography{main}
%
\clearpage 
\clearpage
\onecolumn
\appendix

\sloppy

\section{Prioritized Enumeration}
\label{app:search}


\subsection{Proof of Theorem~\ref{thrm:productgraph}}



\begin{proof}
We show that if $\zeta \models_B \rho$ then every agent $i \in B$ achieves a path in the abstract graph $\G_i$ of its specification $\p_i$. Let $0=k_0 \leq k_1 \leq \cdots \leq k_\ell \leq t$ be the indices that satisfy the criteria in Definition~\ref{def:achievepath}.

For agent $i\in B$, let indices $0 \leq z_0 < \cdots < z_x < \ell$ be such that the agent makes progress along the edge $e_{z_y} = \ol{u}_{z_y} \to \ol{u}_{z_{y} +1}$ for $0 \leq y \leq x$. Note that agents can make progress only till they reach a final state in their abstract graph. So, $(u_{z_y})_i \notin F_i$ and $(u_{z_x +1})_i \in F_i$ for all $0 \leq y \leq x$.  Further note, $(u_{z_0})_i = u_0^i$ and  $ (u_{z_{y+1}})_i = (u_{z_y +1})_i $ for $0\leq y < x$ since the agent $i$ has not made any progress in between.

Let $\zeta_y = \zeta_{k_{z_y}: k_{z_y +1}}$ be the sub-trajectory that achieves the edge $e_{z_y}$ for $0 \leq y \leq x$, i.e. $\zeta_y \models_B e_{z_y}$ for $0 \leq y \leq x$.

Since, agent $i$ has made progress along $e_{z_y}$, using Definition~\ref{def:edgerollout}, we can obtain  indices $0= p_0\leq\dots<  p_{x+1}\leq t$ on the trajectory such that 
\begin{itemize}
    \item $p_y \leq k_{z_y} \leq p_{y+1}$ for $0\leq y < x$
    \item $s_{p_{y+1}} \in \beta_i((u_{z_y+1})_i)$ for all $0 \leq y \leq x$

    \item $\zeta_{k_{z_y}: p_{y+1}} \in \Zc_{\safe, i}^{(e_{z_y})_i}$ for all $0 \leq y \leq x$
    \item $\zeta_{p_{y+1}: k_{{z_y}+1}} \in \first(\Zc_{\safe, i}^{(u_{z_y +1})_i \to w_i})$ for $w_i \in \outgoing((u_{z_y +1})_i)$ for $0\leq y < x$ and $\zeta_{p_{x+1}:k_{z_x +1}} \models \Zc_{\safe, i}^{(u_{z_x +1})_i}$
    
    (This is because $(u_{z_y})_i \notin F_i$ and $(u_{z_x +1})_i \in F_i$ for all $0 \leq y \leq x$, as observed earlier)
\end{itemize}

Additionally, by using, $(u_{z_y +1})_i = (u_{z_{y+1}})_i$ for $0\leq y < x$ since the agent $i$ has not made any progress in between, the above is simplified to: 
there exists  indices $0= p_0\leq\dots< p_{x+1} \leq t$ on the trajectory such that

\begin{itemize}
    \item $p_y \leq k_{z_y} \leq p_{y+1}$ for $0\leq y \leq x$
    \item $s_{p_{y+1}} \in \beta_i((u_{z_{y+1}})_i)$ for all $0 \leq y \leq x$

    \item $\zeta_{k_{z_y}: p_{y+1}} \in \Zc_{\safe, i}^{(e_{z_y})_i}$ for all $0 \leq y \leq x$
    \item $\zeta_{p_{y+1}: k_{{z_y}+1}} \in \first(\Zc_{\safe, i}^{(e_{z_{y+1}})_i})$  for $0\leq y < x$ and $\zeta_{p_{x+1}:k_{z_x +1}} \models \Zc_{\safe, i}^{(u_{z_x +1})_i}$
    
    (This is because $(e_{z_{y+1}})_i$ is an outgoing edge from $(u_{z_{y+1}})_i$)
\end{itemize}

\paragraph{}

Our goal is to show that $\zeta_{p_y:p_{y+1}} \in \Zc_{\safe, i}^{(e_{z_y})_i}$ for $0\leq y \leq x$.
We already know that $\zeta_{k_{z_y}: p_{y+1}} \in \Zc_{\safe, i}^{(e_{z_y})_i}$ for all $0 \leq y \leq x$.
Observing the fact that for any edge $e$ of $\G_i$ the set $\first(\Zc_{\safe,i}^e)$ is closed under concatenation, it is sufficient to show $\zeta_{p_y:k_{z_y}} \in \first(\Zc_{\safe, i}^{(e_{z_y})_i})$ for $0\leq y \leq x$. We do this in two cases. \begin{itemize}
    \item \textbf{Case $y = 0$}: In this case $\zeta_{p_0:k_{z_0}} = \zeta_{k_0:k_{z_0}}$. So, it has made no progress along the path till $s_{k_{z_0}}$. So, the non-progressing agent will ensure its trajectory is safe with respect to outgoing edges from vertex $(u_{z_0})_i = (u_0)_i$. Now, $(e_{z_0})_i$ is an outgoing edge from $(u_{z_0})_i$. So, we get that $\zeta_{0:k_{z_0}} \in \first(\Zc_{\safe, i}^{(e_{z_0})_i})$.
    
    \item \textbf{Case $0<y\leq x$}: Here, we know that $\zeta_{p_{y}: k_{{z_{y-1}+1} }} \in \first(\Zc_{\safe, i}^{(e_{z_y})_i})$.  So, it is sufficient to show that $\zeta_{k_{{z_{y-1}}+1}: k_{z_y}} \in  \first(\Zc_{\safe, i}^{(e_{z_y})_i}).$ This is true because agent $i$ is has not progressed on any of the   edges between $\ol{u}_{{{z_{y-1}}+1}}$ and $\ol{u}_{{z_y}}$. The $i^{\text{th}}$ component of all vertices between these states is $(u_{{z_y}})_i$, since agent $i$ will not change its vertex. Now $(e_{z_y})_i$ is an outgoing edge from $(u_{z_y})_i$. So, in particular, we get that $\zeta_{k_{{z_{y-1}}+1}: k_{z_y}} \in  \first(\Zc_{\safe, i}^{(e_{z_y})_i})$.

\end{itemize}

\paragraph{}

So, consider the path $(u_0)_i\to (u_{{z_0}})_i \to  (u_{{z_1}})_i \to (u_{{z_x}})_i \to (u_{{z_x} + 1})_i \in F_i$ in graph $\G_i$. 
There exists indices $0= p_0\leq p_1<...<p_{x+1}\leq t$ such that 
\begin{itemize}
    \item $s_{p_0} = s_0 \in \beta_i((u_0)_i)$ (from Definition~\ref{def:achievepath}), $s_{p_y} \in \beta_i((u_{{z_y}})_i)$ for $0<y\leq x$ and $s_{p_{x+1}} \in \beta_i((u_{{z_{x} + 1}})_i)$ (from inferences made above)
    
    \item $\zeta_{p_y:p_{y+1}} \in \Zc_{\safe, i}^{(e_{z_y})_i}$ for $0\leq y \leq x$
    
    \item $\zeta_{p_{x+1}:t} \in \Zc_{\safe,i}^{(u_{{z_x} + 1})_i }$ because agent $i$ cannot progress any further after visiting state ${(u_{{z_x} + 1})_i}\in F_i$.
    
\end{itemize}

Thus, for agent $i$,  $\zeta \models \G_i$, which implies $\zeta \models \p_i$.\qed
\end{proof}

\subsection{Algorithm}

The complete algorithm for prioritized enumeration is outlined in Algorithm~\ref{Alg:PriortizedEnum}.
The details of non-standard functions are given below. 

\begin{algorithm}[t]
	\caption{\textsc{PrioritizedEnumeration}\\
	\textbf{Inputs:} $n$-agent Environment $\M$ and agent specifications $\p_1, \dots, \p_n$ \\
	\textbf{Output:} Ranking scheme}
	\label{Alg:PriortizedEnum}
	
	\begin{algorithmic}[1]
	
	\STATE{Initialize $\pathcostmaxheap \gets \varnothing$}    
	    \STATE{\textbf{for} $i \in [n]$ \textbf{do} $\G_{i} \gets \AbstractGraph(\p_i)$}\label{algline:learn:ConstructAbstractGraph}

        \FOR{$B \in 2^{[n]}$}	
            \STATE{$\G_B \gets \MakeProduct(\G_1, \dots \G_n, B)$}
            \STATE{Initialize path policies $\Pi(\ol{u}_0) \gets \{\varepsilon\}$ and $\Pi(\ol{u}) \gets \varnothing$ if $\ol{u}\neq\ol{u}_0$}
            
            \STATE{Initialize state distribution $\Gamma(\ol{u}_0) \gets \{\mathsf{At}(s_0)\}$ and $\Gamma(\ol{u}) \gets \varnothing$ if $\ol{u}\neq\ol{u}_0$}
           
           
            \STATE{$\TSortList\gets \TopoSort(\ol{U}, \ol{E})$}
            \WHILE{$\TSortList \neq \varnothing$}
                \STATE{$\ol{u} \gets \TSortList.\pop()$}
                \STATE{$\eta_{\ol{u}} \gets \StateDistribution(\Gamma(\ol{u}))$} 
                \FOR{$e = \ol{u} \to \ol{v}\in \outgoing(\ol{u})$}
                    \STATE {$\pi_e \gets \learnpolicy(e, \eta_{\ol{u}})$}
                    \STATE{$\eta_{\ol{v},e} \gets \reachdistribution(e, \pi_e, \eta_{\ol{u}})$}
                    \STATE{Add $\eta_{\ol{v},e}$ to $\Gamma(\ol{v})$}
                    \STATE{\textbf{for} $\pi_\rho \in \Pi(\ol{u})$ \textbf{do} Add $\pi_\rho\circ\pi_e$ to $\Pi(\ol{v})$}
                \ENDFOR
                \IF{$\ol{u} \in F$}
                    \FOR{$\pi_\rho\in \Pi(\ol{u})$}
                        \STATE{$c \gets \estimatecost(\pi_\rho, \p_1,\dots, \p_n)$}
                        \STATE{Add $(\pi_\rho, c)$ to $\pathcostmaxheap$}
                    \ENDFOR 
                \ENDIF 
            \ENDWHILE
        \ENDFOR
        \RETURN{$\pathcostmaxheap$}
	\end{algorithmic}
\end{algorithm}


\paragraph{$\StateDistribution$:} The set $\Gamma(\ol{u})$ consists of as many state distributions as there are incoming edges into the state when $\ol{u} \neq \ol{u_0}$. When $\ol{u} = \ol{u}_0$, $\Gamma(\ol{u})$ only contains the distribution corresponding to the initial state distribution of the underlying environment. The function $\StateDistribution$ computes an initial state distribution for the input abstract product vertex $\ol{u}$ by taking an average of all of the distributions in $\Gamma(\ol{u})$.

\paragraph{$\learnpolicy$:} Use (single agent) RL to learn a co-operative joint policy $\pi_e$ so that $\pi_e$ achieves the edge $e=\ol{u}\to \ol{v}$ from the given initial state distribution $\eta_{\ol{u}}$.  We use single-agent RL, specifically Q-learning, to learn a co-operative joint policy to achieve an edge with the reward $\mathbbm{1}(\zeta\models_B e)$. 
Precisely, we learn $\pi_e$ such that
$$  \pi_e \in \operatorname*{\arg\max}_{\pi} \Pr_{s_0\sim\eta_{\ol{u}}, \zeta\sim\mathcal{D}_{\pi, s_0}}\left[\zeta\models_B e\right]$$ 
where $\zeta\sim\mathcal{D}_{\pi_e, s_0}$ is the trajectory sampled from executing policy $\pi_e$ from state $s_0$.

\paragraph{$\reachdistribution$:} Given an edge $e = \ol{u}\to\ol{v}$, edge policy $\pi_e$, and initial state distribution  $\eta_{\ol{u}}$, this function evaluates the state distribution induced on $\ol{v}$ upon executing policy $\pi_e$ with an initial state distribution $\eta_{\ol{u}}$. Formally, for any $s\in\S$
$$\Pr_{s'\sim\eta_{\ol{v},e}}[s=s'] = \Pr_{s_0\sim\eta_{\ol{u}}, \zeta\sim\mathcal{D}_{\pi_e, s_0}}\left[s=s_{k}\mid \zeta_{0:k}\models_B e \ \text{and}\ \forall k'<k, \zeta_{0:k'}\not\models_B e\right]$$ 
where $\zeta\sim\mathcal{D}_{\pi_e, s_0}$ is the trajectory sampled from executing policy $\pi_e$ from state $s_0$ and $k$ is the length of the smallest prefix of $\zeta$ that achieves $e$.

\paragraph{$\estimatecost$:} Once a path policy $\pi_{\rho}$ is learnt, we estimate the probability of satisfaction of the specifications $J_i(\pi_\rho)$ for all agents $i$ using Monte-Carlo sampling. Then $\code{welfare}(\pi_\rho)$ is computed by taking the mean of the probabilities of satisfaction of agent specifications.

\section{Nash Equilibria Verification}\label{app:verify}

\subsection{Best Punishment Strategy}\label{sec:best_puni}
Our verification procedure involves computing the best punishment strategies $\tau^*[j]$ against agent $j$ for each $j\in[n]$. We give a proof of Theorem~\ref{thm:best_pun} below from which it follows that computing the best punishment strategies is sufficient to decide whether a given finite-state deterministic joint policy $\pi$ can be extended to an $\epsilon$-Nash equilibrium.

\begin{proof}[Proof of Theorem~\ref{thm:best_pun}]
It follows from the definition of $\pi\Join\tau$ that the distribution over $H$-length trajectories induced by $\pi\Join\tau$ in $\M$ is the same as the one induced by $\pi$ since $\tau$ is never triggered when all agents are following $\pi$. Therefore, $J_j(\pi\Join\tau) = J_j(\pi)$ for all $j$ and $\tau$. Now, suppose there is a $\tau$ such that $\pi\Join\tau$ is an $\epsilon$-Nash equilibrium. Then for all $j$,
$$J_j((\pi\Join\tau)_{-j}, \bestr_j(\pi\Join\tau))\leq J_j(\pi\Join\tau)+\epsilon = J_j(\pi) + \epsilon.$$
But $\tau^*[j]$ minimizes the LHS of the above equation which is independent of $\tau\setminus\tau[j]$. Therefore, for all $j$,
\begin{align*}
    J_j((\pi\Join\tau^*)_{-j},\bestr_j(\pi\Join\tau^*))&\leq J_j((\pi\Join\tau)_{-j},\bestr_j(\pi\Join\tau))\\
    &\leq J_j(\pi) + \epsilon\\
    &= J_j(\pi\Join\tau^*)+\epsilon.
\end{align*}
Hence, $\pi\Join\tau^*$ is an $\epsilon$-Nash equilibrium. The rest of the Theorem follows from the definition of $\epsilon$-Nash equilibrium.\qed
\end{proof}
\newcommand{\initial}{\mathsf{init}}
\newcommand{\form}{\mathsf{Formula}}
\newcommand{\dead}{\mathsf{dead}}

\subsection{Reward Machine Construction}
\label{appendix:rm}

In this section, we detail the construction of reward machines from $\spectrl$ specifications such that the reward of any finite-length trajectory is 1 if the trajectory satisfies the specification and 0 otherwise. 

We proceed by constructing deterministic finite-state automata (DFA) that accepts all trajectories which satisfy the specification. Next, we will convert the DFA into a reward machine with the desired reward function. 

\paragraph{DFA Construction.}


A finite-state automaton is a tuple $D = (Q, \B, \delta, q_{\initial}, F)$ where $Q$ is  a finite-set of states, $\B$ is a finite-set of propositions, $q_{\initial}$ is the initial state, and $F \subseteq Q$ is the set of accepting states. The transition relation is defined as $\delta \subseteq Q \times \form(\B) \times Q$ where $\form(\B)$ is the set of boolean formulas over propositions $\B$.
A finite-state automaton is {\em deterministic} if every assignment $\sigma \in 2^\P$ can transition to a unique state from every state, i..e, if for all states $q\in Q$ and assignments $\sigma \in 2^\B$,   $|\{q' \mid (q, b, q') \in \delta \text{ and } \sigma \models b  \}| \leq 1$. Otherwise, it is {\em non-deterministic}. 
Every non-deterministic finite-state automata (NFA) can be converted to a deterministic finite-state automata (DFA).
A {\em run} of a {\em word} (sequence of assignments over $\B$) given by $w = w_0\dots w_m \in (2^{\B})^*$ is a sequence of states $\rho = q_0\dots q_{m+1}$ such that $q_0 = q_\initial$ and there exists $(q_i, b_i, q_{i+1}) \in \delta$ such that $w_{i}\models b_i$ for all $0\leq i <m$.
A run $q_0\dots q_{m+1}$ is accepting if $q_{m+1} \in F$. A word $w$ is accepted by $D$ if it has an accepting run.


Let $\spectrl$ specifications be defined over the set of basic predicates $\P_0$.We define a labelling function $L : \S \to 2^{\P_0}$ such that $L(s) = \{p \mid \semantics{p}(s) =\true\}$. Given a trajectory $\zeta = s_0,\dots, s_t$ in the environment $\M$, let its proposition sequence $\L(\zeta)$ be given by $\L(s_0),\dots, \L(s_t)$.

\begin{lemma}
Given $\spectrl$ specification $\p$, we can construct a DFA $D_\p$ such that a trajectory $\zeta \models \p$ iff $\L(\zeta)$ is accepted by $D_\p$.
\end{lemma}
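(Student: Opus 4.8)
The plan is to prove the Lemma by structural induction on the \spectrl specification $\p$, constructing $D_\p$ compositionally so that accepting runs on $\L(\zeta)$ correspond exactly to the semantic conditions defining $\zeta \models \p$. Since the statement only asserts the existence of such a DFA, I would actually build a nondeterministic automaton $N_\p$ first (the constructions for sequencing and disjunction are far cleaner with nondeterminism) and then invoke the standard NFA-to-DFA subset construction, which is already recalled in the preamble, to obtain $D_\p$. Throughout, I keep the invariant that for every trajectory $\zeta$, $\zeta \models \p$ iff $N_\p$ accepts $\L(\zeta)$, and since the labelling $\L$ is a pointwise relabelling of states into subsets of $\P_0$, formulas over $\P_0$ on transitions can directly test ``$s_i \models b$'' via ``$\L(s_i) \models b$''.

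First I would handle the base case $\p = \eventually{b}$: build $N_\p$ with two states, an initial state $q_0$ with a self-loop labelled $\true$ and a transition labelled $b$ to an accepting sink $q_f$ (which has a $\true$ self-loop). A run exists reaching $q_f$ exactly when some $s_i \models b$, matching the semantics $\exists i \le t,\ s_i \models b$. For the inductive cases, assume $N_{\p_1}, N_{\p_2}$ satisfy the invariant. For $\choice{\p_1}{\p_2}$, take the disjoint union with a fresh initial state $\epsilon$-connected (or, to stay within the formalism, with transitions duplicating the outgoing transitions of both old initial states) to the two initial states; acceptance of $\L(\zeta)$ in the union holds iff it holds in one of the components, matching ``$\zeta \models \p_1$ or $\zeta \models \p_2$''. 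For $\p_1; \p_2$, concatenate: redirect transitions entering accepting states of $N_{\p_1}$ so that they (also) branch into the initial state of $N_{\p_2}$; care is needed because the semantics requires $\zeta_{0:i} \models \p_1$ and $\zeta_{i+1:t} \models \p_2$ for some split index $i < t$, so the ``handoff'' must occur strictly between two positions — I would implement this by letting $N_{\p_1}$'s accepting configuration at position $i$ nondeterministically hand control to $N_{\p_2}$'s initial state which then reads from position $i+1$ onward, and I would double-check the off-by-one at the split point. For $\p_1 \always{b}$, take $N_{\p_1}$ and intersect it (product construction) with the two-state automaton enforcing that every position satisfies $b$, i.e. conjoin $b$ onto every transition guard and require a ``dead'' rejecting state whenever $b$ fails; acceptance then means $\zeta \models \p_1$ and $\forall i \le t,\ s_i \models b$.

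The main obstacle I anticipate is getting the index bookkeeping exactly right in the sequencing case $\p_1; \p_2$ and the ``always'' case, because the \spectrl semantics are stated over trajectories of length $t$ (so $t+1$ states $s_0,\dots,s_t$) with split indices satisfying strict inequalities ($i < t$ in sequencing), whereas automata naturally process the word $\L(\zeta) = \L(s_0)\cdots\L(s_t)$ of length $t+1$. I would be careful to state precisely what the automaton reads (the full proposition sequence including $s_t$) and to verify that the nondeterministic guess of the split point ranges over exactly the legal values of $i$. A secondary subtlety is that subgoal/predicate tests are on states, not transitions, so I would fix the convention that the guard on the transition taken while ``at'' position $i$ tests $\L(s_i)$, and confirm this convention is applied uniformly across all four constructions. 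Once the NFA invariant is established by induction, determinization gives $D_\p$ with the same accepted language, completing the proof; this DFA is then the object converted to a reward machine in the next step of the appendix.
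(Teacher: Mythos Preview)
Your proposal is correct and follows essentially the same approach as the paper: structural induction on $\p$ with the standard regular-expression-style constructions (two-state automaton for $\eventually{b}$, guard conjunction for $\always{b}$, transition redirection into $N_{\p_2}$'s initial state for sequencing, and initial-state merging for choice), followed by determinization. The only organizational difference is that the paper determinizes after each sequencing/choice step (so the inductive hypothesis is stated for DFAs), whereas you keep NFAs throughout and determinize once at the end; both are fine, and in fact you supply more care on the split-index bookkeeping than the paper, which explicitly skips those details.
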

\begin{proof}
We use structural induction on \spectrl specifications to construct the desired DFA.
The construction is very similar to the construction of finite-state automata from regular expressions, and hence details of proof have been skipped~\cite{hopcroft2001introduction}. The construction is given below:

\paragraph{Eventually (\rm $\p ::= \eventually{b} $).}

Construct finite-state automata $D_\p = (\{q_\initial, q\}, \P_0, \delta, q_\initial, \{q\})$ where $$\delta = \{(q_\initial, \neg b, q_\initial), (q_\initial, b, q), (q, \mathsf{True}, q) \}.$$

Clearly, $D_\p$ is deterministic because the only state from which more that two transitions emanate is $q_\initial$ is defined on functions that negate one another ($b$ and $\neg b$), hence they will not have any common assignment. 



\paragraph{Always (\rm $\p ::=  \p_1 \always{b}$.)}

Let the DFA for $\p_1$ be $D_1 = (Q, \P_0, \delta_1, q_\initial, F)$. 
Then the DFA for $\p$ is given by  $D_\p = (Q, \P_0, \delta, q_\initial, F)$   where $$\delta = \{(q, b\wedge b', q') \mid (q, b', q')\in\delta_1 \}.$$

$D_\p$ is deterministic because $D_1$ is deterministic.

\paragraph{Sequencing (\rm $\p_1; \p_2$).}

Let the DFA for $\p_1$ and $\p_2$ be $D_1 = (Q_1, \P_0, \delta_1, q^1_\initial, F_1)$ and $D_2 = (Q_2, \P_0, \delta_2, q^2_\initial, F_2)$, respectively.  
Construct the NFA $N_\p = ( Q_1 \sqcup Q_2, \P_0, \delta, q^1_\initial, F_2)$ where $$\delta = \delta_1 \cup \delta_2 \cup \bigcup_{f\in F_1} \mathsf{DivertAwayFrom}(f)$$
where $\mathsf{DivertAwayFrom}(f)
= \{ (q_1^1, b, q_\initial^2)\} ~|~ (q_1^1, b, f) \in \delta_1\}$. 
Essentially, transitions in $\mathsf{DivertAwayFrom}(f)$ divert all incoming transitions to $f\in F_1$ to the initial state of the second DFA. 


Then, the DFA $D_\p$ is obtained from determinization of $N_\p$.

\paragraph{Choice (\rm $\p::=\choice{\p_1}{\p_2}$).}

Let the DFA for $\p_1$ and $\p_2$ be $D_1 = (Q_1, \P_0, \delta_1, q^1_\initial, F_1)$ and $D_2 = (Q_2, \P_0, \delta_2, q^2_\initial, F_2)$, respectively.  
Construct NFA $N_\p = (Q_1 \sqcup Q_2 \setminus \{q_\initial^2\}, \P_0, \delta, q_\initial^1, F_1\sqcup F_2)$ where
\begin{align*}
    \delta =  ~\delta_1 \ 
    \cup &~ \delta_2 \setminus \{(q^2_1,b, q^2_2) ~|~ (q^2_1,b, q^2_2)\in \delta_2 \text{ and } q^2_1 = q^2_\initial \} \\
    \cup &~ \{(q^1_\initial, b, q^2_2) ~ | ~ (q^2_1, b, q^2_2)  \in \delta_2 \text{ and } q^2_1 = q^2_\initial\}
\end{align*}

Then, the DFA $D_\p$ is obtained from determinization of $N_\p$.

Lastly, we can extend DFA $D_\p$ to make it {\em complete}, i.e., if for all states $q\in Q$ and assignments $\sigma \in 2^{\P_0}$,   $|\{q' \mid (q, b, q') \in \delta \text{ and } \sigma \models b  \}| = 1$.
\end{proof}

\paragraph{Reward Machine.}

Given a $\spectrl$ specification $\p$, let $D_\p = (Q, \P_0, \delta, q_\initial, F)$ be the DFA such that a trajectory $\zeta \models \p$ iff $L(\zeta)$ is accepted by the DFA $D_\p$, where $L: \S\to2^{\P_0}$ is the labelling function. WLOG, assume $\D_\p$ is complete. 

Construct a reward machine $\Rc_\p = (Q\cup \{\dead\}, \delta_u, \delta_r, q_\initial) $ where
the state transition function $\delta_u: \S\times \A\times Q \to Q$ is defined as
\begin{align*}
    \delta_u(s, \--, q) = 
    q' & \text{where } (q, b, q') \in \delta \text{ s.t. } L(s) \models  b
\end{align*}
and the reward function $\delta_r:\S\times Q \to [-1, 1]$ is given by 
\begin{align*}
    \delta_r(s, q) = 
    \begin{cases}
    1 & \text{if } q\notin F, q'  = \delta_u(s, \--, q')  \text{ and } q'\in F\\
    -1 & \text{if } q\in F, q'  = \delta_u(s, \--, q')  \text{ and } q'\notin F\\
    0 & \text{otherwise }
    \end{cases}
\end{align*}
Observe that the above functions are well defined since the DFA is deterministic and complete.

\begin{proof}[Proof of Theorem~\ref{thm:rm_const}]
Let $\Rc_\p$ be as constructed above. 
Let $ \zeta = s_0,s_1, \dots s_t,s_{t+1}$.
Then, by construction a run $\rho = q_0, q_1\dots q_{t+1}$ of $L(\zeta_{0:t})$ in $D_\p$ is also a run of $\zeta$ in $\Rc_\p$. Then, the reward function is design so that (a) each time the run visits a state in $F$ from a non-accepting state, it will receive a reward of 1, (b) each time the run visits a state in $Q\setminus F$ from a state in $F$, and it receives -1, and (c) 0 otherwise. 

Suppose $\zeta_{0:t}$ does not satisfy $\p$. Then $\rho$ is not an accepting run in $D_\p$. Then, each time the run  visits a state in $ F$, the run will exit states in $F$ after a finite amount of time. Thus, either $\zeta$ receives a reward of $0$ or it receives  a reward of $1$ and $-1$ an equal number of times. In this case,  $\Rc_\p(\zeta) = 0$ since the $+1$s and $-1$s will cancel each other out. 

Suppose $\zeta \models \p$.
Then, $\rho$ is an accepting run in $D_\p$. Let $k$ be the largest index such that $ q_k \notin F$ and $q_\ell \in F$ for all $k< \ell \leq t+1$. So, $\delta_r(s_k,q_k) = 1$ and $\delta_r(s_\ell,q_\ell) = 0$ for all $k<\ell\leq t$.
Additionally, the run $q_0, \dots, q_{k}$ is not an accepting run in $\D_\p$. Thus, the trajectory $\zeta_{0:k-1} $ does not satisfy $\p$, thus $\Rc_\p(\zeta_{0:k-1}) = 0$. So, $\Rc_\p(\zeta) = \Sigma_{z=0}^{t} \delta_r(s_z, q_z) = \Sigma_{z=0}^{k-1} \delta_r(s_z, q_z) + \delta_r(s_k,q_k) + \Sigma_{z=k+1}^{t} \delta_r(s_z,q_z)$. Since $\Sigma_{z=0}^{k-1} \delta_r(s_z, q_z) = \Sigma_{z=k+1}^{t} \delta_r(s_z,q_z) = 0$, we get that $\Rc_\p(\zeta)=1$.\qed
\end{proof}

\subsection{Simulating Punishment Game}\label{app:puni_game}

The optimal deviation score of agent $j$ w.r.t $\pi$, $\code{dev}_j^{\pi}$, is the min-max value of a two player zero-sum Markov game in
which the max-agent is agent $j$ of $\M$ and the min-agent is a coalition of punishing agents $\{i\in[n]\mid i\neq j\}$ whose combined policy is constrained to be of the form $(\pi\Join\tau)_{-j}$. Before solving this min-max game, we eliminate the constraint on the min-agent's policy by constructing a product of $\M$ with reward machine $\Rc_j = (Q_j, \delta_u^j, \delta_r^j, q_0^j)$ and finite-state deterministic joint policy $\pi = (M, \alpha, \sigma, m_0)$. We now describe the product construction.

\begin{proof}[Proof of Theorem~\ref{thm:pun_game}]
For an agent $j$, the two-player zero-sum game $\M_j$ is defined by $\M_j = (\S_j, A_j, \A_{-j}, P_j, H+1, s_0^j, R_j)$ with rewards where,
\begin{itemize}
    \item The set of states $\S_j$ is the product $\S_j = \S\times M\times Q_j\times\{\bot,\top\}$.
    \item The set of actions of the max-agent is $A_j$ and the set of actions of the min-agent is $\A_{-j} = \prod_{i\neq j}A_i$. Given $a_j\in A_j$ and $a_{-j}\in\A_{-j}$, we denote the joint action by $(a_j,a_{-j})\in\A$.
    \item The last component of a state denotes whether agent $j$ has deviated from $\pi_j$ in the past or not. Intuitively, $\bot$ implies that agent $j$ has \emph{not} deviated from $\pi_j$ in the past and $\top$ implies that it has deviated from $\pi_j$ in the past. We define an update function $f_j$ which is used to update this information at every step. The deviation update function $f_j:\S\times\A\times M\times \{\bot,\top\}\to\{\bot,\top\}$ is defined by $f_j(s,a,m,\top) = \top$ and $f_j(s,a,m,\bot) = \bot$ if $a_j = \sigma(s,m)_j$ and $\top$ otherwise.  \\
    
    The transitions of $\M_j$ are such that the action of the min-agent $a_{-j}$ is ignored and replaced with the output of $\pi_{-j}$ until agent $j$ deviates from $\pi_j$ (or equivalently, until the last component of the state is $\top$). The transition probabilities are given by
    \begin{itemize}
        \item $P_j((s,m,q,b), a, (s',m',q',b')) = P(s,(a_j,\sigma(s,m)_{-j}),s')$ if $m' = \alpha(s,(a_i,\sigma(s,m)_{-j}),m)$, $q' = \delta_u(s,(a_i,\sigma(s,m)_{-j}),q)$, $b = \bot$ and $b' = f_j(s,a,m,b)$.
        \item $P_j((s,m,q,b), a, (s',m',q',b')) = P(s,a,s')$ if $m' = \alpha(s,a,m)$, $q' = \delta_u(s,a,q)$, $b = \top$ and $b' = f_j(s,a,m_j,b)$.
        \item $P_j((s,m,q,b), a, (s',m',q',b')) = 0$ otherwise.
    \end{itemize}
    
    \item The initial state is $s_0^j = (s_0,m_0, q_0^j,\bot)$.
    \item The rewards are given by $R_j((s,m,q,b), a) = \delta_r^j(s, q)$.
\end{itemize}
Let us denote by $\bar{\pi}_1$ and $\bar{\pi}_2$ the policies of the max-agent and the min-agent respectively. Then the expected reward attained by the max-agent is
$$\bar{J}_j(\bar{\pi}_1, \bar{\pi}_{2}) = \E\Big[\sum_{k=0}^{H} R_j(\bar{s}_k,a_k)\mid \bar{\pi}_1,\bar{\pi}_2\Big]$$
where the expectation is w.r.t. the distribution over trajectories of length $H+1$ generated by using $(\bar{\pi}_1,\bar{\pi}_{2})$ in $\M_j$. Given a trajectory $\bar{\zeta} = \bar{s_0}\xrightarrow{a_0}\bar{s_1}\xrightarrow{a_1}\ldots \xrightarrow{a_{t-1}}\bar{s_t}$ in $\M_j$, we denote by $\bar{\zeta}\downarrow_{\M}$ the trajectory projected to the state space of $\M$---i.e., $\bar{\zeta}\downarrow_{\M} = {s_0}\xrightarrow{a_0}{s_1}\xrightarrow{a_1}\ldots \xrightarrow{a_{t-1}}{s_t}$. From Theorem~\ref{thm:rm_const} and the above definition of $\M_j$ it follows that for any trajectory $\bar{\zeta}$ in $\M_j$ of length $H+1$ we have $$\sum_{k=0}^{H} R_j(\bar{s}_k,a_k) = \Rc_j(\bar{\zeta}\downarrow_{\M}) = \mathbbm{1}(\bar{\zeta}_{0:H}\downarrow_{\M}\models \p_j).$$
Let $\D^{\M}[\pi]$ denote the distribution over length $H$ trajectories in $\M$ generated by $\pi$ and $\D^{\M_j}[\bar{\pi}]$ denote the distribution over length $H+1$ trajectories in $\M_j$ generated by $\bar{\pi}$.
It is easy to see that any policy $\pi_j'$ for agent $j$ in $\M$ can be interpreted as a policy $g(\pi_j')$ for the max-agent in $\M_j$ and any policy $\bar{\pi}_1$ for the max-agent in $\M_j$ can be interpreted as a policy $g'(\bar{\pi}_1)$ for agent $j$ in $\M$. Since the actions of the min-agent in $\M_j$ are only taken into account after agent $j$ deviates from $\pi_j$, we also have that any policy of the form $(\pi\Join\tau)_{-j}$ for the punishing agents in $\M$ corresponds to a policy $h(\tau)$ of the min-agent in $\M_j$ and any policy $\bar{\pi}_2$ of the min-agent in $\M_j$ corresponds to a policy $(\pi\Join h'(\bar{\pi}_2))_{-j}$ for the punishing agents in $\M$. Furthermore the mappings $g,g',h,h'$ satisfy the property that for any trajectory $\bar{\zeta}$ of length $H+1$ in $\M_j$, we have
\begin{itemize}
    \item for any $\tau$ and $\pi_j'$ in $\M$, $\D^{\M}[(\pi\Join\tau)_{-j}, \pi_j'](\bar{\zeta}\downarrow_{\M}) = \D^{\M_j}[g(\pi_j'), h(\tau)](\bar{\zeta})$ and,
    \item for any $\bar{\pi}_1$ and $\bar{\pi}_2$ in $\M$, $\D^{\M}[(\pi\Join h'(\bar{\pi}_2))_{-j}, g'(\bar{\pi}_1)](\bar{\zeta}\downarrow_{\M}) = \D^{\M_j}[\bar{\pi}_1, \bar{\pi}_2](\bar{\zeta})$. 
\end{itemize}

Therefore

\begin{align*}
\code{dev}_j^{\pi} &= \min_{\tau[j]}\max_{\pi_j'} J_j((\pi\Join\tau)_{-j}, \pi_j') \\
&= \min_{\tau[j]}\max_{\pi_j'}\E_{\zeta\sim\D^{\M}[(\pi\Join\tau)_{-j}, \pi_j']}\Big[\zeta\models\p_i\Big]\\
&= \min_{\bar{\pi}_2}\max_{\bar{\pi}_1}\E_{\bar{\zeta}\sim\D^{\M_j}[\bar{\pi}_1,\bar{\pi}_2]}\Big[\sum_{k=0}^{H} R_j(\bar{s}_k,a_k)\Big]\\
&= \min_{\bar{\pi}_2}\max_{\bar{\pi}_1}\bar{J}_{j}(\bar{\pi}_1,\bar{\pi}_2).
\end{align*}
It is easy to see that the function $h'$ has the desired properties of $\textsc{PunStrat}$. Finally, we observe that given a simulator for $\M$ it is straightforward to construct a simulator for $\M_j$. Similarly, given an estimate $\tilde{\M}$ of $\M$ we can use the above definition of $\M_j$ to construct an estimate $\tilde{\M}_j$ of $\M_j$.\qed
\end{proof}



\subsection{Solving Punishment Games}\label{app:bfsestimate}

Our sample efficient algorithm for solving the punishment game relies on Assumption~\ref{assump:model}. There are algorithms for solving min-max games (with unknown transition probabilities) without this assumption \cite{pmlr-v119-bai20a} albeit with slightly worse sample complexity. We outline the details of our algorithm below.

\begin{algorithm}[t]
\begin{algorithmic}[1]
\STATE $\tilde{M} \leftarrow (\tilde{\S}=\{s_0\}, \A, \tilde{P}=\emptyset, H, s_0)$
\STATE $K\leftarrow \Big\lceil \frac{2|\S|^2|M|^2|Q|^2H^4}{\delta^2}\log\Big(\frac{2|\S|^2|\A|}{p}\Big) \Big\rceil$
\STATE $\code{queue}\leftarrow$ $[s_0]$
\WHILE{$\lnot\ \code{queue.isempty()}$}
\STATE $s\leftarrow\code{queue.pop()}$
\FOR{$a\in\A$}
\STATE {\color{blue}// Initialize number of visits to each state}
\STATE $N\leftarrow\code{empty-map()}$
\FOR{$s'\in\tilde{\S}$}
\STATE $N[s'] \leftarrow 0$
\ENDFOR
\STATE {\color{blue}// Obtain $K$ samples for the state-action pair $(s,a)$}
\FOR{$x\in\{1,\ldots,K\}$}
\STATE $s'\sim P(\cdot \mid s,a)$
\STATE {\color{blue}// Add any newly discovered state to $\tilde{\S}$ and the map $N$}
\IF{$s'\notin \tilde{\S}$}
\STATE $\tilde{\S}\leftarrow\tilde{\S}\cup\{s'\}$
\STATE $\code{queue.add}(s')$
\STATE $N[s']\leftarrow 0$
\ENDIF
\STATE {\color{blue}// Increment number of visits to $s'$}
\STATE $N[s'] \leftarrow N[s'] + 1$
\ENDFOR
\STATE {\color{blue}// Store estimated transition probabilities in $\tilde{P}$}
\FOR{$s'\in \tilde{S}$}
\STATE $\tilde{P}(s'\mid s,a) \leftarrow \frac{N[s']}{K}$
\ENDFOR
\ENDFOR
\ENDWHILE
\STATE \textbf{return} $\tilde{\M}$
\caption{\textsc{BFS-Estimate}\\
Inputs: Precision $\delta$, failure probability $p$.\\
Outputs: Estimated model $\tilde{\M}$ of $\M$.}
\label{alg:estimate}
\end{algorithmic}
\end{algorithm}

\paragraph{Estimating $\M$ under Assumption~\ref{assump:model}.} The first step is to estimate the transition probabilities of $\M$ using \textsc{BFS-Estimate} which is outlined in Algorithm~\ref{alg:estimate}. \textsc{BFS-Estimate} performs a breadth-first-search on the transition graph of $\M$. In order to figure out all outgoing edges from a state $s$, multiple samples are collected by taking each possible action $K$-times from $s$. Newly discovered states are then added to the state space of $\tilde{\M}$ and the collected samples are used to estimate transition probabilities. The value $K$ is defined in line 2 of the algorithm in which $|Q|$ denotes the maximum size of the state space of reward machine $\Rc_j$ for any agent $j$---i.e., $|Q| = \max_{j}|Q_j|$. \textsc{BFS-Estimate} has the following approximation guarantee.

\begin{lemma}
With probability at least $1-p$, for all $s\in\tilde{\S}$, $a\in\A$ and $s'\in\S$,
$$\big\lvert\tilde{P}(s'\mid s,a) - P(s'\mid s,a)\big\rvert \leq \varepsilon = \frac{\delta}{2|\S||M||Q|H^2}$$
where $\tilde{P}(s'\mid s,a)$ is taken to be $0$ if $s'\notin \tilde{\S}$.
\end{lemma}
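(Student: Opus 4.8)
The plan is a routine concentration-plus-union-bound argument. First I would note that, by construction, whenever \textsc{BFS-Estimate} processes a pair $(s,a)$ (i.e.\ pops $s$ from the queue and iterates over action $a$), it draws $K$ i.i.d.\ samples $X_1,\dots,X_K\sim P(\cdot\mid s,a)$ — legitimate under Assumption~\ref{assump:model}, since $s$ has been observed — and then sets $\tilde P(s'\mid s,a)=\frac1K\sum_{x=1}^K\mathbbm{1}(X_x=s')$ for every $s'\in\tilde\S$, while $\tilde P(s'\mid s,a)$ is kept at $0$ precisely when $s'$ occurs in none of these $K$ samples ($N[s']=0$), consistent with the stated convention. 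Hence for every processed $(s,a)$ and every $s'\in\S$, the value $\tilde P(s'\mid s,a)$ is the empirical mean of $K$ i.i.d.\ $\mathrm{Bernoulli}(P(s'\mid s,a))$ variables, so $\E[\tilde P(s'\mid s,a)]=P(s'\mid s,a)$.

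Next I would apply Hoeffding's inequality: for fixed $(s,a,s')$,
\[
\Pr\big[\,|\tilde P(s'\mid s,a)-P(s'\mid s,a)|>\varepsilon\,\big]\le 2\exp(-2K\varepsilon^2).
\]
To deal with the fact that the set of processed pairs (equivalently, $\tilde\S$) is itself random, I would use a coupling: fix an independent infinite i.i.d.\ stream from $P(\cdot\mid s,a)$ for each $(s,a)\in\S\times\A$, and let the algorithm consume the first $K$ entries of the relevant stream when (if ever) it processes $(s,a)$; the algorithm's output is then a deterministic function of these streams. A union bound over the \emph{fixed} index set $\S\times\A\times\S$ (of size $|\S|^2|\A|$) yields that with probability at least $1-2|\S|^2|\A|\exp(-2K\varepsilon^2)$, the bound $|\tilde P(s'\mid s,a)-P(s'\mid s,a)|\le\varepsilon$ holds simultaneously for all $(s,a,s')$, hence in particular for all $s\in\tilde\S$ actually discovered, all $a\in\A$, and all $s'\in\S$.

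Finally I would verify that the value of $K$ set in the algorithm suffices. With $\varepsilon=\frac{\delta}{2|\S||M||Q|H^2}$ we get $\frac1{2\varepsilon^2}=\frac{2|\S|^2|M|^2|Q|^2H^4}{\delta^2}$, so the choice $K\ge\frac1{2\varepsilon^2}\log\!\big(\tfrac{2|\S|^2|\A|}{p}\big)$ made in line~2 gives $2K\varepsilon^2\ge\log\!\big(\tfrac{2|\S|^2|\A|}{p}\big)$, i.e.\ $2|\S|^2|\A|\exp(-2K\varepsilon^2)\le p$, which is the claimed $1-p$ success probability. There is no real obstacle here; the only point needing a little care is the adaptivity of the breadth-first exploration (which pairs get sampled depends on earlier random outcomes), and this is disposed of by the coupling remark above, since for each \emph{individual} state-action pair the $K$ draws are i.i.d.\ from the correct transition distribution regardless of the history, so the per-pair Hoeffding estimate is valid and the union bound over the non-random index set $\S\times\A\times\S$ goes through unchanged.
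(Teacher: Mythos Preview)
Your proposal is correct and follows essentially the same approach as the paper: apply Hoeffding/Chernoff to each triple $(s,a,s')$, union-bound over at most $|\S|^2|\A|$ triples, and plug in the value of $K$. Your coupling remark to handle the adaptivity of $\tilde{\S}$ is in fact more careful than the paper, which simply union-bounds over $\tilde{\S}\times\A\times\S$ and then uses $|\tilde{\S}|\le|\S|$ without commenting on the randomness of $\tilde{\S}$.
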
\label{lem:prob_estimates}
\begin{proof}
For any given $s\in\tilde{\S}$, $a\in\A$ and $s'\in\S$, the probability $\tilde{P}(s'\mid s, a)$ is estimated using $K$ independent samples from $P(\cdot\mid s,a)$. Therefore, using Chernoff bounds, we get
$$\Pr\Big[\big\lvert\tilde{P}(s'\mid s,a) - P(s'\mid s,a)\big\rvert > \varepsilon\Big] \leq 2e^{-2K\varepsilon^2}$$
Applying union bound over all triples $(s,a,s')\in\tilde{\S}\times\A\times\S$ and substituting the values of $K$ and $\varepsilon$, we get
\begin{align*}
    \Pr\Bigg[\bigcup_{s,a,s'}\Big\{\big\lvert\tilde{P}(s'\mid s,a) - P(s'\mid s,a)\big\rvert > \varepsilon\Big\}\Bigg] &\leq 2|\S|^2|\A|e^{-2K\varepsilon^2}\\
    &\leq 2|\S|^2|\A|e^{-\log(\frac{2|\S|^2|\A|}{p})}
    = p.
\end{align*}
Hence we obtained the desired bound.\qed
\end{proof}

\paragraph{Obtaining estimates of $\M_j^{\pi}$.} After estimating $\M$, for any finite-state deterministic joint policy $\pi$ and any agent $j$, we perform the product construction outlined in Section~\ref{app:puni_game} with the estimated model $\tilde{\M}$ to obtain an estimate $\tilde{\M}_j^{\pi}$ of the punishment game $\M_j^{\pi}$. The constructed model $\tilde{\M}_j^{\pi}$ can be used to estimate $\code{dev}_j^{\pi}$ as claimed in Theorem~\ref{thm:estimate}.

\begin{proof}[Proof of Theorem~\ref{thm:estimate}]
Since the transition probabilities in $\tilde{\M}_j^{\pi}$ are inherited from $\tilde{\M}$, from Lemma~\ref{lem:prob_estimates} we have that with probability at least $1-p$, for any $\pi$, $j$, $\bar{s},\bar{s}'\in\S_j$ and $a\in\A$ such that $\bar{s}$ is in the state space of $\tilde{\M}_j^{\pi}$, 
$$\big\lvert\tilde{P}_j(\bar{s}'\mid \bar{s},a) - P_j(\bar{s}'\mid \bar{s},a)\big\rvert \leq \varepsilon = \frac{\delta}{2|\S||M||Q|H^2}$$
where $\tilde{P}_j$ represents the transition probabilities of $\tilde{\M}_j$\footnote{Omitting the superscript $\pi$ in $\tilde{\M}_j^{\pi}$}. Consider the model $\tilde{\M}_j'$ which is a modification of $\tilde{\M}_j$ that has state space $\S_j$ (state space of $\M_j$) and for any $\bar{s},\bar{s}'\in\S_j$ and $a\in\A$, its transition probabilities are defined by $\tilde{P}_j'(\bar{s'}\mid\bar{s},a) = \tilde{P}_j(\bar{s'}\mid\bar{s},a)$ if $\bar{s}$ is in the state space of $\tilde{M}_j$ and $P_j(\bar{s'}\mid\bar{s},a)$ otherwise. We have $\big\lvert\tilde{P}_j'(\bar{s}'\mid \bar{s},a) - P_j(\bar{s}'\mid \bar{s},a)\big\rvert \leq \varepsilon$ for all $\bar{s},\bar{s}'\in\S_j$ and $a\in\A$. For any two policies $\bar{\pi}_1$ and $\bar{\pi}_2$ of the max-agent and the min-agent in $\M_j$ respectively, we denote by $\bar{J}^{\tilde{\M}_j}(\bar{\pi}_1,\bar{\pi}_2)$ and $\bar{J}^{\tilde{\M}_j'}(\bar{\pi}_1,\bar{\pi}_2)$ the expected reward over $H+1$ length trajectories generated by $(\bar{\pi}_1,\bar{\pi}_2)$ in $\tilde{\M}_j$ and $\tilde{\M}_j'$ respectively. Then $\bar{J}^{\tilde{\M}_j}(\bar{\pi}_1,\bar{\pi}_2) = \bar{J}^{\tilde{\M}_j'}(\bar{\pi}_1,\bar{\pi}_2)$ because both the models assign the same probability to all runs as any run that leaves the state space of $\tilde{\M}_j$ has probability zero in both the models. Now we can apply Lemma 4 of \cite{rmax} to conclude that $|\bar{J}^{\tilde{\M}_j'}(\bar{\pi}_1,\bar{\pi}_2) - \bar{J}_j^{\pi}(\bar{\pi}_1,\bar{\pi}_2)| \leq \delta$ and hence $|\bar{J}^{\tilde{\M}_j}(\bar{\pi}_1,\bar{\pi}_2) - \bar{J}_j^{\pi}(\bar{\pi}_1,\bar{\pi}_2)| \leq \delta$ for any $\bar{\pi}_1$ and $\bar{\pi}_2$. This implies that for any $\bar{\pi}_2$ we have
\begin{equation}\label{eq:intresult}
    |\max_{\bar{\pi}_1}\bar{J}^{\tilde{\M}_j}(\bar{\pi}_1,\bar{\pi}_2) - \max_{\bar{\pi}_1}\bar{J}_j^{\pi}(\bar{\pi}_1,\bar{\pi}_2)|\leq \delta
\end{equation}
and therefore can conclude that
$$|\min_{\bar{\pi}_2}\max_{\bar{\pi}_1}\bar{J}^{\tilde{\M}_j}(\bar{\pi}_1,\bar{\pi}_2) - \min_{\bar{\pi}_2}\max_{\bar{\pi}_1}\bar{J}_j^{\pi}(\bar{\pi}_1,\bar{\pi}_2)|\leq \delta.$$
Applying Theorem~\ref{thm:pun_game} we get
$$|\min_{\bar{\pi}_2}\max_{\bar{\pi}_1}\bar{J}^{\tilde{\M}_j}(\bar{\pi}_1,\bar{\pi}_2) - \code{dev}_j^{\pi}|\leq \delta.$$
Now, let $\bar{\pi}_2^* = \arg\min_{\bar{\pi}_2}\max_{\bar{\pi_1}}\bar{J}^{\tilde{\M}_j}(\bar{\pi}_1,\bar{\pi}_2)$ and $\tau[j] = \textsc{PunStrat}(\bar{\pi}_2^*)$. Then from Equation~\ref{eq:intresult} we can conclude that $|\max_{\bar{\pi}_1}\bar{J}^{\tilde{\M}_j}(\bar{\pi}_1,\bar{\pi}_2^*) - \max_{\bar{\pi}_1}\bar{J}_j^{\pi}(\bar{\pi}_1,\bar{\pi}_2^*)|\leq \delta$. Using Theorem~\ref{thm:pun_game} we get
$$\Big|\max_{\bar{\pi}_1}\bar{J}^{\tilde{\M}_j^{\pi}}(\bar{\pi}_1, \bar{\pi}_{2}^*) - \max_{\pi_j'}J_j((\pi\Join\tau[j])_{-j},\pi_j')\Big| \leq \delta.$$
Finally the total number of samples used is at most $|\S||\A|K = O\left(\frac{|\S|^3|M|^2|Q|^4|\A|H^4}{\delta^2}\log\left(\frac{|\S||\A|}{p}\right)\right)$.\qed
\end{proof}
For each $j$, the min-max game $\tilde{\M}_j$ is solved in polynomial time using value iteration \cite{pmlr-v119-bai20a} to compute an estimate $\tilde{\code{dev}}_j$ of $\code{dev}_j^{\pi}$ which is used in Line~\ref{algline:devcheck} of Algorithm~\ref{alg:NEverification} to check whether agent $j$ can successfully deviate from $\pi_j$.

\subsection{Soundness Guarantee}\label{app:soundness} The soundness guarantee of \textsc{HighNashSearch} follows from Theorem~\ref{thm:estimate}.

\begin{proof}[Proof of Corollary~\ref{cor:soundness}]
From Theorem~\ref{thm:estimate} we get that with probability at least $1-p$, if a policy $\tilde{\pi}=\pi\Join\tau$ is returned by our algorithm, then for all $j$ we have 
\begin{align*}
    \max_{\pi_j'}J_j((\pi\Join\tau)_{-j},\pi_j')&\leq\tilde{\code{dev}}_j+\delta&&\text{(Equation~\ref{eq:puneffective})}\\
    &\leq J_j(\pi)+\epsilon&&\text{(Line~\ref{algline:devcheck} of Algorithm~\ref{alg:NEverification})}\\
    &=J_j(\pi\Join\tau) + \epsilon.
\end{align*}
Therefore, $\pi\Join\tau$ is an $\epsilon$-Nash equilibrium.\qed
\end{proof}

\section{Baselines}\label{app:baselines}
\subsection{\textsc{Nash Value Iteration}} 
\begin{algorithm}[t]
\begin{algorithmic}[1]
\STATE Initialize joint policy $\pi = (\pi_1,\ldots,\pi_n)$
\STATE Initialize value function $V: \S\times[H+1]\to\R^n$ to be the zero map
\FOR{$t\in\{H,H-1,\ldots,1\}$}
\FOR{$s\in\S$}
\STATE Initialize step game $G_{s}^t:\A\to\R^n$
\FOR{$a = (a_1,\ldots,a_n)\in\A$}
\STATE $G_{s}^t(a_1,\ldots,a_n) = R(s,a) + \E_{s'\sim P(\cdot\mid s,a)}[V(s', t+1)]$
\ENDFOR
\STATE $(d_1,d_2,\ldots,d_n) \gets\textsc{Best-Nash-General-Sum}(G_s^t)\in\D(A_1)\times\cdots\times\D(A_n)$
\STATE $V(s,t)\gets\E_{a_1\sim d_1, a_2\sim d_2,\ldots, a_n\sim d_n}[G_s^t(a_1,\ldots,a_n)]$
\STATE $\pi(s,t)\gets(d_1,d_2,\ldots,d_n)$
\ENDFOR
\ENDFOR
\STATE \textbf{return} $\pi$
\caption{Nash Value Iteration\\
Inputs: $n$-agent Markov game $\M$ with rewards, horizon $H$.\\
Outputs: Nash equilibrium joint policy $\pi = (\pi_1,\ldots,\pi_n)$.}
\label{alg:nvi}
\end{algorithmic}
\end{algorithm}

This baseline first computes an estimate $\tilde{\M}$ of $\M$ using \textsc{BFS-Estimate} (Algorithm~\ref{alg:estimate}) and then computes a product of $\tilde{\M}$ with the reward machines corresponding to the agent specifications in order to define rewards at every step. It then solves the resulting general sum game $\tilde{\M}'$ using value iteration. The value iteration procedure is outlined in Algorithm~\ref{alg:nvi} which uses \textsc{Best-Nash-General-Sum} to solve $n$-player general-sum strategic games (one-step games) at each step. When there are multiple Nash equilibria for a step game, \textsc{Best-Nash-General-Sum} chooses one with the highest social welfare (for that step). In our experiments, we use the library $\code{gambit}$ \cite{gambit} for solving the step games.

\subsection{\textsc{Multi-agent QRM}}

The second baseline (Algorithm~\ref{alg:maqrm}) is a multi-agent variant of QRM~\cite{icarte2018using,tor-etal-arxiv20}. We derive reward machines from agent specifications using the procedure described in Appendix~\ref{appendix:rm}.  

We learn one Q-function for each agent. The Q-function for the $i$-th agent, denoted $Q_i : S \times \Pi_{i\in [n]} U_i \to \A_i$, can be used to derive the best action for the $i$-th agent from the current state of the environment and reward machines of all agents. In every step, $Q_i$ is used to sample an action $a_i$ for the $i$-th agent. The joint action $(a_i)_{i\in [n]}$ is used to take a step in the environment and all reward machines. Finally, each $Q_i$ is individually updated according to the reward gained by the $i$-th agent. For notational convenience, we let $q \xrightarrow{\alpha} q'$ denote $q \gets (1-\alpha)\cdot q + \alpha\cdot q'$.

\begin{algorithm}[t]
\begin{algorithmic}[1]

\STATE{\textbf{for} $i \in [n]$ \textbf{do} $(U_i, \delta_u^i, \delta_r^i, u_0^i) \gets \mathsf{RewardMachine}(\p_i)$}\label{algline:learn:ConstructRewardMachine}

\STATE{\text{\color{blue} // Initialize environment state, reward machines state, and Q-functions}}
\STATE{Current state $s \gets s_0$}
\STATE {\textbf{for} $i \in [n]$ \textbf{do} $u_i \gets u^i_0$} \\
    
\STATE {\textbf{for} $i \in [n]$ \textbf{do} Initialize $Q_i(s,(u_1, \dots, u_n),a_i)$ for all states $s\in S, u_i \in U_i$, and actions $a_i \in\A_i $}

\FOR{$l \in \{0,\ldots,N\}$}
    \STATE{\color{blue} // Sample actions from policy derived from Q-functions}
    \STATE{\textbf{for} $i\in [n]$ \textbf{do} choose action $a_i\in \A_i$ at $(s, (u_1,\dots, u_n))$ using exploration policy derived from $Q_i$ (e.g., $\varepsilon$-greedy)} \\
    \STATE{\color{blue} // Take a step in environment and the reward machines} \\
    \STATE{Take action $a = (a_1, \dots a_n)$ in $\M$ and observe the next state $s'$} \\
    \STATE{\textbf{for} $i \in [n]$ \textbf{do} compute the reward $r_i \gets \delta_r^i(s, u_i)$ and next RM state $u'_i \gets \delta_u^i(s, a , u_i))$} \\
    \STATE{\color{blue} // Update all Q-functions} \\
   \IF{$s'$ is terminal}
        \STATE{\textbf{for} $i \in [n]$ \textbf{do} $Q_i(s, (u_1,\dots, u_n),a) \xleftarrow{\alpha} r_i $}
    \ELSE
        \STATE {\textbf{for} $i \in [n]$ \textbf{do} $Q_i(s, (u_1,\dots, u_n),a_i) \xleftarrow{\alpha} r_i + \gamma\cdot \max_{a'_i \in \A_i} Q_i(s', (u_1',\dots, u_n'), a'_i)$}
  \ENDIF   
   \IF{$s'$ is terminal}
        \STATE{\text{\color{blue} // Reset environment state and reward machines state}} \\
        \STATE{$s \gets s_0$ and \textbf{for} $i \in [n]$ \textbf{do} $u_i \gets u^i_0$}
    \ELSE
        \STATE {$s \gets s'$ and  \textbf{for} $i\in [n]$ \textbf{do} $u_i \gets u_i'$}
    \ENDIF
\ENDFOR
\STATE{\textbf{for} $i\in[n]$ \textbf{do} $\pi_i \gets$ Best action policy derived from $Q_i$} 
\STATE \textbf{return} $(\pi_1, \dots, \pi_n)$
\caption{Multi-agent QRM\\
Inputs: $n$-agent Markov game $\M = (\S, \A=\Pi_{i\in [n]} \A_i, P, H, s_0)$, agent specifications $\p_1, \dots \p_n$, learning rate $\alpha \in (0, 1]$, discount factor $\gamma \in (0, 1]$, $\varepsilon \in (0,1]$\\
Outputs: Joint policy $\pi = (\pi_1,\ldots,\pi_n)$.}
\label{alg:maqrm}
\end{algorithmic}
\end{algorithm}

\section{Benchmark Details}\label{sec:benchmarks}
\subsection{Intersection}
The specifications and the corresponding initial states for the intersection environment are described below.

\begin{description}
   \item[$\p^1$] Two N-S cars both starting at 3 and one E-W car starting at 2. N-S cars' goal is to reach 0 before the E-W car without collision. E-W car's goal is to reach 0 before both N-S agents without collision.
   
   \item[$\p^2$] Same as motivating example except that blue car is not required to stay a car length away from green and orange cars.
   
   \item[$\p^3$] Same as motivating example.
   
   \item[$\p^4$] Two N-S agents (0 and 1) both starting at 3 and one E-W agent (2) starting at 3. Agent 0's task is to reach 0 before other two agents. Agent 1's task is to reach 0. Agent 2's task is to reach 0 before agent 1. All agents must avoid collision.
   
   \item[$\p^5$] Two N-S cars starting at 2 and 3 and three E-W cars all starting at 2, 3 and 4, respectively. N-S cars' goal is to reach 0 before the E-W cars without collision. E-W cars' goal is to reach 0 before both N-S cars without collision.
\end{description}

\subsection{Single Lane Environment}

The environment consists of $k$ agents along a straight track of length $l$. All agents are initially placed at the $0^{\text{th}}$ location and the destination is at the $l^{\text{th}}$ location. In a single step, each agent can either move forward one location (with a failure probability of 0.05) or remain in its current position. We create competitive and co-operative scenarios through agent specifications. For example, we can create competitive scenarios in which an agent meets its specification only if it reaches the final location before all other or a set of other agents. We can also create co-operative scenarios in which the $i$-th agent must reach its goal before the $j$-th agent, for various pairs of agents 
($i,j$). In these cases, the highest social welfare would occur when the agents manage to coordinate so that all ordering constraints are satisfied (we ensure that there are no cycles in the ordering constraints). The specifications are described below and results are in Table~\ref{tab:sort_results}.

\paragraph{Specifications}

\begin{description}
    \item[$\p^1$] All agents should reach the final state

    
    \item [$\p^2$]  Agent 1 should reach its destination before Agent 0. Agent 2 must reach the destination. 
    

    \item [$\p^3$]  Agent 1 should reach its destination before Agent 0. All agents must reach their destination. 
    
    
    \item [$\p^4$] Agent 0 and Agent 1 are competing to reach the destination first. Only the agent reaching first meets the specification. Agent 2's goal is to reach the destination. 
    
    
    \item [$\p^5$] Agent 0 should reach the mid-point before Agent 1. However, Agent 1 should reach the final destination before Agent 0. All agents should eventually reach the final destination. 

    
    \item [$\p^6$] Agent 0 should reach the mid-point before Agent 1. However, Agent 1 and Agent 2 should reach the final destination before Agent 0. All agents should reach the final destination.  


\end{description}

\begin{table*}
\centering\scriptsize
\begin{tabular}{cccrrrr}
\toprule
\multirow{3}{*}{Spec.} &
\multirow{3}{*}{\begin{tabular}[c]{@{}c@{}}Num. of\\ agents\end{tabular}} &
\multirow{3}{*}{Algorithm} &
\multirow{3}{*}{\begin{tabular}[c]{@{}c@{}}\qquad$\code{welfare}(\pi)$\qquad\\ \qquad(avg \textpm ~ std)\qquad\end{tabular}} &
\multirow{3}{*}{\begin{tabular}[c]{@{}c@{}}\qquad$\epsilon_{\min}(\pi)$\qquad\\ \qquad(avg \textpm ~ std)\qquad\end{tabular}} &
\multirow{3}{*}{\begin{tabular}[c]{@{}c@{}}Num. of\\ runs terminated\end{tabular}} &
\multirow{3}{*}{\begin{tabular}[c]{@{}c@{}}Avg. num. of\\ sample steps\\(in millions)\end{tabular}} \\
{} & {} & {} & {} & {} & {}\\
{} & {} & {} & {} & {} & {}\\
\midrule
\multirow{3}{*}{$\p^1$} &
\multirow{3}{*}{3} &
\nash & 1.00 \textpm ~ 0.00 & 0.00 \textpm ~ 0.00 & 10 & 3.02 \\
{} & {} & \textsc{nvi} & 1.00 \textpm ~ 0.00 & 0.00 \textpm ~ 0.00 & 10 & 5.00 \\
{} & {} & \textsc{maqrm} & 1.00 \textpm ~ 0.00 & 0.01 \textpm ~ 0.00 & 10 & 4.00 \\

\midrule
\multirow{3}{*}{$\p^2$} &
\multirow{3}{*}{3} &
\nash & 1.00 \textpm ~ 0.00 & 0.00 \textpm ~ 0.00 & 10 & 3.01 \\
{} & {} & \textsc{nvi} & 1.00 \textpm ~ 0.00 & 0.00 \textpm ~ 0.00 & 10 & 5.00 \\
{} & {} & \textsc{maqrm} & 0.66 \textpm ~ 0.00 & 0.99 \textpm ~ 0.00 & 10 & 4.00
 \\

\midrule
\multirow{3}{*}{$\p^3$} &
\multirow{3}{*}{3} &
\nash & 1.00 \textpm ~ 0.00 & 0.00 \textpm ~ 0.00 & 10 & 3.50 \\
{} & {} & \textsc{nvi} & 1.00 \textpm ~ 0.00 & 0.00 \textpm ~ 0.00 & 8 & 5.00 \\
{} & {} & \textsc{maqrm} & 0.81 \textpm ~ 0.01 & 0.56 \textpm ~ 0.03 & 10 & 4.00
 \\

\midrule
\multirow{3}{*}{$\p^4$} &
\multirow{3}{*}{3} &
\nash & 0.67 \textpm ~ 0.00 & 0.00 \textpm ~ 0.00 & 10 & 3.03 \\
{} & {} & \textsc{nvi} & 0.33 \textpm ~ 0.00 & 0.00 \textpm ~ 0.00 & 10 & 5.00 \\
{} & {} & \textsc{maqrm} & 0.63 \textpm ~ 0.00 & 0.57 \textpm ~ 0.01 & 10 & 4.00
\\

\midrule
\multirow{3}{*}{$\p^5$} &
\multirow{3}{*}{3} &
\nash & 1.00 \textpm ~ 0.00 & 0.00 \textpm ~ 0.00 & 10 & 3.60 \\
{} & {} & \textsc{nvi} & 0.33 \textpm ~ 0.00 & 0.00 \textpm ~ 0.00 & 10 & 5.00 \\
{} & {} & \textsc{maqrm} & 0.62 \textpm ~ 0.01 & 0.47 \textpm ~ 0.02 & 10 & 4.00 \\

\midrule
\multirow{3}{*}{$\p^6$} &
\multirow{3}{*}{3} &
\nash & 1.00 \textpm ~ 0.00 & 0.00 \textpm ~ 0.00 & 10 & 3.68 \\
{} & {} & \textsc{nvi} & 0.00 \textpm ~ 0.00 & 0.00 \textpm ~ 0.00 & 10 & 5.00 \\
{} & {} & \textsc{maqrm} & 0.44 \textpm ~ 0.01 & 0.55 \textpm ~ 0.02 & 10 & 4.00 \\
\bottomrule\\

\end{tabular}
\caption{Results for all specifications in Single Lane Environment. Total of 10 runs per benchmark. Timeout = 24 hrs.}
\label{tab:sort_results}
\end{table*}

\subsection{Gridworld}

The environment is a $4\times 4$ discrete grid with 2-agents. The agents are initially placed at opposite corners of the grid. In every step, each agent can either move in one of  four cardinal directions (with a failure probability) or remain in position. The task of each agent is to visit a series of locations on the grid while ensuring no collision between the agents. The agents must learn to coordinate between themselves to accomplish their tasks. As an example, each agent's task is to visit any one of the other two corners in the grid. In this case, the agents must learn to choose and navigate to different corners to minimize their risk of collision. This specification can be increased in length (thus, in complexity) by sequencing visits to more locations on the grid. All scenarios are cooperative. The specifications are described below and results are in Table~\ref{tab:grid_results}.

\paragraph{Specifications}

\begin{description}
    \item[$\p^1$] Swap the positions of both agents without collision. 
    
    \item [$\p^2$]  Both agents should choose to visit one of the other two corners of the grid without collision.

    \item [$\p^3$] Append $\p^3$ with the specification to reach the corner that is diagonally opposite the initial position of the agent without collision.

    \item [$\p^4$] Append $\p^4$ with the agents swapping their positions without collision. 
    
    \item [$\p^5$] Append $\p^5$ with the agents swapping their positions again without collision. 
    
    
\end{description}

\begin{table*}
\centering\scriptsize
\begin{tabular}{cccrrrr}
\toprule
\multirow{3}{*}{Spec.} &
\multirow{3}{*}{\begin{tabular}[c]{@{}c@{}}Num. of\\ agents\end{tabular}} &
\multirow{3}{*}{Algorithm} &
\multirow{3}{*}{\begin{tabular}[c]{@{}c@{}}\qquad$\code{welfare}(\pi)$\qquad\\ \qquad(avg \textpm ~ std)\qquad\end{tabular}} &
\multirow{3}{*}{\begin{tabular}[c]{@{}c@{}}\qquad$\epsilon_{\min}(\pi)$\qquad\\ \qquad(avg \textpm ~ std)\qquad\end{tabular}} &
\multirow{3}{*}{\begin{tabular}[c]{@{}c@{}}Num. of\\ runs terminated\end{tabular}} &
\multirow{3}{*}{\begin{tabular}[c]{@{}c@{}}Avg. num. of\\ sample steps\\(in millions)\end{tabular}} \\
{} & {} & {} & {} & {} & {}\\
{} & {} & {} & {} & {} & {}\\
\midrule
\multirow{3}{*}{$\p^1$} &
\multirow{3}{*}{2} &
\nash & 0.95 \textpm ~ 0.02 & 0.00 \textpm ~ 0.00 & 10 & 18.86 \\
{} & {} & \textsc{nvi} & 1.00 \textpm ~ 0.00 & 0.00 \textpm ~ 0.00 & 10 & 22.40 \\
{} & {} & \textsc{maqrm} & Timeout & Timeout & 10 & Timeout \\


\midrule
\multirow{3}{*}{$\p^2$} &
\multirow{3}{*}{2} &
\nash & 0.99 \textpm ~ 0.01 & 0.00 \textpm ~ 0.00 & 10 & 29.74 \\
{} & {} & \textsc{nvi} & 1.00 \textpm ~ 0.00 & 0.00 \textpm ~ 0.00 & 8 & 38.40 \\
{} & {} & \textsc{maqrm} & Timeout & Timeout & 10 & Timeout \\

\midrule
\multirow{3}{*}{$\p^3$} &
\multirow{3}{*}{2} &
\nash & 0.98 \textpm ~ 0.02 & 0.00 \textpm ~ 0.00 & 10 & 48.40 \\
{} & {} & \textsc{nvi} & 1.00 \textpm ~ 0.00 & 0.00 \textpm ~ 0.00 & 4 & 57.60 \\
{} & {} & \textsc{maqrm} & Timeout & Timeout & 10 & Timeout \\

\midrule
\multirow{3}{*}{$\p^4$} &
\multirow{3}{*}{2} &
\nash & 0.94 \textpm ~ 0.02 & 0.00 \textpm ~ 0.00 & 10 & 65.91 \\
{} & {} & \textsc{nvi} & 0.94 \textpm ~ 0.01 & 0.00 \textpm ~ 0.00 & 7 & 92.80 \\
{} & {} & \textsc{maqrm} & Timeout & Timeout & 10 & Timeout \\

\midrule
\multirow{3}{*}{$\p^5$} &
\multirow{3}{*}{2} &
\nash & 0.86 \textpm ~ 0.05 & 0.00 \textpm ~ 0.00 & 10 & 87.05 \\
{} & {} & \textsc{nvi} &  0.13 \textpm ~ 0.01 & 0.00 \textpm ~ 0.00 & 10 & 128.00 \\
{} & {} & \textsc{maqrm} & Timeout & Timeout & 10 & Timeout \\

\bottomrule\\

\end{tabular}
\caption{Results for all specifications in Gridworld Environment. Total of 10 runs per benchmark. Timeout = 24 hrs.}
\label{tab:grid_results}
\end{table*}

\section{Experimental Details}\label{sec:add_results}

\paragraph{Hyperparameters.} In our implementation of \nash, we used Q-learning with $\varepsilon$-greedy exploration to learn edge policies  with $\varepsilon=0.15$, learning rate of $0.1$ and discount factor $\gamma=0.9$. In the verification phase, we used Nash factor $\epsilon=0.06$ and precision value $\delta=0.01$. In our verification algorithm, the failure probability $p$ is only used within \textsc{BFS-Estimate} to compute the number of samples $K$ to collect for each state-action pair. In the experiments, we directly set the value of $K$ to $1000$.

\paragraph{Platform.} Our implementation of \nash and the baselines is in Python3. All experiments were run on a 80-core machine with processor speed 1.2GHz and Ubuntu 18.04.


\end{document}